\documentclass[11pt]{article}
\usepackage{amsmath, amsthm, amsfonts, amssymb}
\usepackage{arydshln}
\usepackage{authblk}

\usepackage[
backend=biber,
style=alphabetic,
sorting=nyt,
giveninits=true,
maxbibnames=10
]{biblatex}

\usepackage{cancel}
\usepackage{comment}
\usepackage{enumitem}
\setlist[itemize]{label=--}

\usepackage[colorlinks=true]{hyperref}
\hypersetup{
    linkcolor=blue,
    citecolor=blue,
    urlcolor=blue
}

\usepackage{mathtools}

\usepackage{xcolor}

\usepackage[colorinlistoftodos]{todonotes}

\newcommand{\F}[0]{\mathbb{F}}
\newcommand{\Fq}[0]{\mathbb{F}_q}
\newcommand{\N}[0]{\mathbb{N}}

\newcommand{\es}[2]{e_{#1,\, #2}}

\DeclarePairedDelimiter{\fl}{\lfloor}{\rfloor}
\DeclarePairedDelimiter{\cl}{\lceil}{\rceil}

\theoremstyle{plain}
\newtheorem{theo}{Theorem}[section]
\newtheorem{prop}[theo]{Proposition}
\newtheorem{lem}[theo]{Lemma}
\newtheorem{cor}[theo]{Corollary}
\newtheorem{notat}{Notation}[section]

\theoremstyle{definition}
\newtheorem{defin}{Definition}[section]
\newtheorem{ex}{Example}[section]

\theoremstyle{remark}

\newtheorem{prob}{Problem}

\title{Symmetric Models for Syndrome Decoding}
\author[1]{Elisa Gorla}
\author[1]{Simone Trebiani}
\date{}
\affil[1]{Institut de mathématiques, Université de Neuchâtel}

\begin{document}

\maketitle

\noindent\textbf{Abstract.} This paper introduces a new polynomial model for the exact variant of the Syndrome Decoding Problem (SDP) in the binary case. The model is based on elementary symmetric polynomials. We estimate the computational complexity of solving the corresponding polynomial system by establishing bounds on the degree of regularity and on the solving degree of the ideal associated to the model. The complexity estimate is lower than for previous polynomial models.
We also provide a variant of the model whose complexity depends directly on the specific instance of the SDP and is lower than for the first model. Finally, we discuss how to apply our approach to solve other variants of the SDP.\\

\noindent\textbf{Keywords.} Syndrome Decoding Problem $\cdot$ Code-based cryptography $\cdot$ Solving degree $\cdot$ Elementary symmetric polynomials $\cdot$ Degree of regularity

\section*{Introduction}

Over the past few years, there has been growing interest in new branches of cryptography, driven by the search for problems capable of resisting attacks from the emerging quantum computing. Several proposed schemes are based on problems from coding theory, thus giving rise to \textit{code-based cryptography}. For example, three of them (BIKE, Classic McEliece, and HQC) were included by the \textit{National Institute of Standards and Technology} in the list of candidates in the fourth and final round of the Post-Quantum Cryptography Standardization Process\footnote{\url{https://csrc.nist.gov/projects/post-quantum-cryptography/post-quantum-cryptography-standardization/round-4-submissions}}. HQC was one of two key-encapsulation mechanisms selected for standardization\footnote{\url{https://csrc.nist.gov/Projects/post-quantum-cryptography/post-quantum-cryptography-standardization/selected-algorithms}}. Two more code-based schemes (CROSS and LESS) were included in the list of second round candidates in the Additional Digital Signature Schemes PQC Standardization Process\footnote{\url{https://csrc.nist.gov/Projects/pqc-dig-sig/round-2-additional-signatures}}, although they were not selected for the third round~\cite{NISTIR8610}. 

For this reason, assessing the hardness of the underlying mathematical problems becomes of central importance. One can do so by estimating the computational complexity of the best known algorithms or by exploring new attack strategies. One such strategy is to model the problem as a polynomial system and then to use Gr\"obner bases techniques to solve it. For the \textit{Syndrome Decoding Problem} (SDP), a first step in this direction was taken in~\cite{MPSmodels}, where a polynomial model for an instance of the SDP was proposed. A further step was taken in~\cite{caminata2025quadraticmodelingssyndromedecoding}, where the original model was improved and an estimate of its computational complexity was given. In the same work, the authors also provided a polynomial model for the SDP over larger fields.\\

\noindent\textit{Our contributions \& Structure of the paper.} In Section~\ref{sec:Preliminaries} we recall some preliminaries on the Syndrome Decoding Problem and the Polynomial System Solving Problem and give a brief overview of the previous polynomial models for the SDP. In Section~\ref{secFirstModel} we propose a new polynomial model based on elementary symmetric polynomials for the Exact Syndrome Decoding Problem over the binary field, and we investigate its computational complexity, obtaining a bound that is lower than that of~\cite{caminata2025quadraticmodelingssyndromedecoding} (Theorem~\ref{cordsolS}). In Section~\ref{secComplexityESDP} we then propose a variant of the model, for which the complexity of solving the polynomial system depends more clearly on the specific instance of the SDP and is even lower than the previous result (Corollary~\ref{cordsolS'}).
In Section~\ref{sezL}, we discuss a linear algebra problem that arises from the variant model. Finally, in the last section we discuss how to use the ideas of the previous sections to solve related problems, such as the Bounded SDP (Section~\ref{secBSDP}), the non-binary version of the SDP (Section~\ref{secSizeq>2}), and the Regular SDP (Section~\ref{sezRSDP}).

\section{Preliminaries}\label{sec:Preliminaries}

\subsection{Notation}
Let $\Fq$ be a finite field with $q \geq 2$ elements and $R_n$ the polynomial ring $\Fq[x_1,\dots,x_n]$ with $n \geq 1$ variables and coefficients in $\Fq$. $R_{n,d}$ denotes the vector space of homogeneous polynomials of $R_n$ of degree $d$. $(f_1,\dots,f_j)$ denotes the ideal generated by the polynomials $f_1,\dots,f_j$.

Let $f\in R_n$ be a polynomial and let $f^{\operatorname{top}}$ denote the homogeneous component of highest degree of $f$. If $S=\{f_1,\dots,f_j\}$ is a set of polynomials, $S^{\operatorname{top}}=\{f_1^{\operatorname{top}},\dots,f_j^{\operatorname{top}}\}$ is the set of the homogeneous components of highest degree of the elements of $S$.

Let $[n]$ denote the set $\{1,\dots,n\}$ and $S_n$ its group of permutations.

For $\alpha \subseteq [n]$, let $x_\alpha= x_{\alpha_1}\cdots x_{\alpha_i}$ be the squarefree monic monomial in $R_n$ that is the product of the variables indexed by the elements of $\alpha$.

Given a set $A$, the cardinality of $A$ is denoted by $|A|$. So, for any given subset $\alpha\subseteq [n]$ as above, we have $\operatorname{deg}(x_\alpha)=\operatorname{deg}(x_{\alpha_1}\cdots x_{\alpha_i})=\lvert\alpha\rvert$.
 
Given $n,m\geq 1$, let $\F_q^{m\times n}$ be the set of matrices with $m$ rows, $n$ columns and coefficients in $\F_q$. We denote the determinant and the rank of a matrix $A$ by $\operatorname{det}(A)$ and $\operatorname{rk}(A)$, respectively.

\subsection{The Syndrome Decoding Problem}
Before defining the Syndrome Decoding Problem, we recall the usual notation for linear codes. Given integers $n\geq 2$ and $k\in [n]$, a $[n,k]$-\textit{linear code} $\mathbf{C}$ is a $k$-dimensional subspace of $\Fq^n$, where $n$ is its \textit{length} and $k$ its \textit{dimension}. An element of $\mathbf{C}$ is called a \textit{codeword}, and we define the \textit{Hamming weight} of a codeword $c\in\mathbf{C}$, denoted by $\operatorname{wt}(c)$, as the number of its non-zero entries. The \textit{minimum distance} of $\mathbf{C}$, denoted by $d(\mathbf{C})$, is the Hamming weight of the smallest non-zero codeword of $\mathbf{C}$.

Given a $[n, k]$-linear code $\mathbf{C}$, a full rank matrix $\mathbf{H}\in\F_q^{(n-k)\times n}$ is a \textit{parity check matrix} of $\mathbf{C}$ if $\mathbf{H}\cdot c^\top=0$ holds for every codeword $c$ of $\mathbf{C}$. Given a parity check matrix $\mathbf{H}$ of $\mathbf{C}$, the \textit{syndrome} of a vector $e\in\F_q^n$ is $s=\mathbf{H}\cdot e^\top\in\F_q^{n-k}$.

We can now describe the Syndrome Decoding Problem (SDP) in two variants: Given $n\geq 2$ and $k\in [n]$, a parity check matrix \mbox{$\mathbf{H}\in\F_q^{(n-k)\times n}$} of a $\left[n, k\right]$-linear code $\mathbf{C}$, let $s$ be a vector in $\F_q^{n-k}$ and $t$ be an integer such that $0\leq t\leq n$.

\begin{prob}[(B)SDP: (Bounded) Syndrome Decoding Problem]
    Find a vector $e\in\F_q^n$ such that $\mathbf{H}\cdot e^\top=s^\top$ and $\operatorname{wt}(e)\leq t$.
\end{prob}

\begin{prob}[ESDP: Exact weight Syndrome Decoding Problem]
    Find a vector $e\in\F_q^n$ such that $\mathbf{H}\cdot e^\top=s^\top$ and $\operatorname{wt}(e)=t$.
\end{prob}

A triple $(\mathbf{H},s,t)$ as above is an \textit{instance} of the SDP.\\

The SDP is known to be NP-complete~\cite{1055873} and believed to be hard on average. For this reason, some code-based cryptosystems base their security on the hardness of the SDP, most notably the McEliece cryptosystem~\cite{mceliece1978public} and its later variants, such as the one~\cite{ClassicMcEliece2022} proposed at the NIST competition~\cite{Alagic2025NISTIR8545}. In the McEliece settings, the public key is a seemingly random code $\mathbf{C'}$, which is a scrambled version of a $t$-error correcting code $\mathbf{C}$, which is the private key. The encryption of a message $m$ is its encoding through $\mathbf{C'}$, with the addition of some noise $e$ of weight $t$. The original message can be recovered thanks to the error correction properties of $\mathbf{C}$. An equivalent version based directly on the parity check matrix, known as the Niederreiter scheme~\cite{1571980076566605568}, can be interpreted by an attacker as an instance of the SDP.

The maximum weight of a uniquely correctable error $e$ depends on the properties of $\mathbf{C}$. More precisely, if the target weight $t$ is chosen such that $t\leq\fl*{\frac{d(\mathbf{C})-1}{2}}$, the uniqueness of the solution and, therefore, the recovery of the message is guaranteed. An SDP-based cryptographic scheme is generally constructed so that the weight $t$ of the inserted error is maximized, while still guaranteeing unique decoding.

\subsection{The Polynomial System Solving Problem}\label{sec:posso}
In this paper, we devise a polynomial system whose solutions coincide with the solutions of the Syndrome Decoding Problem. By doing so, we reduce the SDP to the problem of solving a polynomial system. The related problem is therefore:
\begin{prob}[PoSSo: Polynomial System Solving]
    Given a set of $r$ polynomials $F=\{f_1,\dots,f_r\}\subseteq R_n$, find a solution of the system $F=0$, that is, find a vector $y=(y_1,\dots,y_n)\in\Fq^n$ such that $f_1(y)=\dots=f_r(y)=0$.
\end{prob}
PoSSo was proven to be NP-complete in the case $q=2$ in~\cite[28]{FRAENKEL197915}. In addition, it is widely believed to be an average-case difficult problem. This is witnessed by the fact that a whole branch of post-quantum cryptography, called \textit{multivariate cryptography}, relies on its hardness. \\

The generic approach for solving the PoSSo problem is via \textit{Gr\"obner bases}. Gr\"obner bases were introduced by Buchberger in~\cite{BUCHBERGER2006475} and may be regarded a ``nice" system of generators of the ideal $(F)$. Over a finite field and for a system that has finitely many solutions over any field extension, the solutions can be computed in polynomial time from the reduced lexicographic Gr\"obner basis of the system. We refer the interested reader to~\cite{kreuzerrobbiano2000} for a mathematical treatment of Gr\"obner bases and to~\cite{Caminata_2021} for a discussion on how to use them in cryptography. 
%The algorithm is based on a generalized Shape Lemma and the factorization of some univariate polynomials: this can be done in polynomial time (for example, with Berlekamp's algorithm~\cite{berlekamp, Geddes1992AlgorithmsCA}), once one has the Gr\"obner basis. The issue is that obtaining it can be computationally demanding, especially for the lexicographic ones. In fact, the most convenient strategy is to compute a Gr\"obner basis with respect to another term order, the graded reverse lexicographic order (\texttt{degrevlex}), and then apply an algorithm, such as the FGLM algorithm~\cite{FAUGERE1993329}, to convert it into the reduced lexographic one. The complexity of FGLM is polynomial in the number of variables and in the number of solutions of the associated system, and since this is also true for the algorithm for computing the solutions using the Gr\"obner basis, 
In~\cite{Caminata_2021} it is argued that, for systems that are currently relevant for cryptography, the complexity of computing their solutions using Gr\"obner bases is dominated by the complexity of computing a degree-reverse lexicographic (\texttt{degrevlex}) Gr\"obner basis. This will therefore be the focus of the rest of our discussion.

There are many algorithms for the computation of Gr\"obner bases, such as F4~\cite{FAUGERE199961}, F5~\cite{F5}, and XL~\cite{XL} and its variants~\cite{buchmann_et_al:DagSemProc.09031.10, 10.1007/978-3-540-88403-3_14}, which all rely on Gaussian elimination on the \textit{Macaulay matrices} associated with the polynomial system $F=\{f_1,\dots,f_r\}$. The Macaulay matrix $M_{\leq d}$ of degree $d$ of $F$ is defined as follows: It has columns indexed by the monic monomials $x_\beta$ of degree $\leq d$ (ordered according to a specific term order), rows indexed by the polynomials $x_{\alpha}f_i$, where $x_{\alpha}$ is a monic monomial such that $\operatorname{deg}(x_{\alpha}f_i)\leq d$, and its entry $((\alpha,i),\beta)$ is the coefficient of the monomial $x_\beta$ in the polynomial $x_{\alpha}f_i$. The least $d$ such that Gaussian elimination on the Macaulay matrix $M_{\leq d}$ produces a \texttt{degrevlex} Gr\"obner basis of $F$ is the \textit{solving degree} of $F$, denoted by $d_{\operatorname{sol}}(F)$. Therefore, up to constant factors, the complexity of computing a \texttt{degrevlex} Gr\"obner basis is asymptotically dominated by the complexity of performing Gaussian elimination~\cite[111-118]{matrixcomput} on the Macaulay matrix of degree $d_{\operatorname{sol}}(F)$. Since the matrix has $\sum\limits_{i=1}^r\binom{n+d_{\operatorname{sol}}(F)-\operatorname{deg}(f_i)}{d_{\operatorname{sol}}(F)-\operatorname{deg}(f_i)}$ rows and $\binom{n+d_{\operatorname{sol}}(F)}{d_{\operatorname{sol}}(F)}$ columns, the complexity is bounded from above by
\begin{equation*}
    O\left(\binom{n+d_{\operatorname{sol}}(F)}{d_{\operatorname{sol}}(F)}^\omega\,\right)
\end{equation*}
for sufficiently large $n, d_{\operatorname{sol}}(F)$, where $2\leq\omega<3$ is the \textit{matrix multiplication exponent}~\cite[313-316]{vonzurGathen_Gerhard_2013}.

The algorithms mentioned above employ their own strategies to reduce the number of operations required to compute a Gr\"obner basis, but we consider this rough estimate to keep the discussion as general as possible.\\

The issue is that the solving degree of a system $F$ may be difficult to compute or estimate: for this reason, it has been related to several algebraic invariants (see e.g.~\cite{CAMINATA2023322}), such as the \textit{degree of regularity}, introduced by Bardet, Faugère, and Salvy in~\cite{Faugre2004OnTC}. The degree of regularity of $F$ is defined as
\begin{equation*}
    d_{\operatorname{reg}}(F)=\min\{d\in\N\mid (F^{\operatorname{top}})\cap R_{n,d}=R_{n,d}\}.
\end{equation*}
This invariant is often easier to compute than the solving degree. The next result by Salizzoni relates the degree of regularity and the solving degree. It allows us to focus on the degree of regularity in the sequel.
\begin{theo}[{\cite[Theorem 1]{salizzoni2023upperboundsolvingdegree}}]\label{thm:salizz}
    Let $F=\{f_1,\dots,f_r\}$ be a family of polynomials and let $\sigma$ be a degree-compatible term order. Then,
    \begin{equation*}
        d_{\operatorname{sol}}(F)\leq \operatorname{max}\{d_{\operatorname{reg}}(F)+1,\operatorname{deg}(f_1),\dots,\operatorname{deg}(f_r)\}.
    \end{equation*}    
\end{theo}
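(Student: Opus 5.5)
The plan is to compare the row space of the Macaulay matrix $M_{\leq d}$ of $F$, at the degree $d=\max\{d_{\operatorname{reg}}(F)+1,\deg(f_1),\dots,\deg(f_r)\}$, with the ideal $(F)$. I will show that Gaussian elimination on $M_{\leq d}$ already yields a \texttt{degrevlex} Gr\"obner basis. Write $V_{\leq e}$ for the vector space spanned by all products $x_\alpha f_i$ of degree $\leq e$, i.e.\ the row space of $M_{\leq e}$, and $D=d_{\operatorname{reg}}(F)$. Since $\sigma$ is degree-compatible, reducing $M_{\leq d}$ to echelon form produces a basis of $V_{\leq d}$ with pairwise distinct leading terms. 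Thus it suffices to prove that every leading term of an element of $(F)$ is divisible by a leading term of some element of $V_{\leq d}$.

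First I establish a degree-falling control. Take $g=\sum_i h_i f_i$ with $\deg(h_if_i)\leq e$ for all $i$. Suppose cancellation occurs in degree $e$, so that $\sum_i (h_i)_{e-\deg f_i} f_i^{\operatorname{top}}=0$. This is a syzygy of $F^{\operatorname{top}}$. The aim is to show that, once $e>D$, every homogeneous syzygy of $F^{\operatorname{top}}$ of degree $e$ is generated in degree $\le D+1$, or at least can be rewritten within $V_{\leq e}$ so as to lower the representation degree. The key input is that $(F^{\operatorname{top}})_D=R_{n,D}$. It follows that the homogeneous ideal $(F^{\operatorname{top}})$ contains all monomials of degree $\geq D$, so every degree-$e$ form with $e\ge D$ lies in $(F^{\operatorname{top}})_e$.

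The main step is the following claim: for $e\geq d$, every element of $V_{\leq e}$ already lies in $V_{\leq e-1}+$ (multiples of elements of $V_{\le d}$ by monomials), in a way that keeps track of leading terms. I would argue by induction on $e$ descending to $d$. Take $g\in(F)$ with minimal representation degree $e>d$. If there is no top-degree cancellation, then $\operatorname{LT}(g)$ is a monomial multiple of a leading term of some $h_if_i$, and we are done. If there is cancellation, I write the top part $\sum_i (h_i)^{\operatorname{top}} f_i^{\operatorname{top}}$. Every monomial of the $(h_i)^{\operatorname{top}}$ of degree $\geq 1$ can be split as $x_j\cdot m$. Grouping by a variable $x_j$, the syzygy becomes $\sum_j x_j s_j$, where each $s_j$ is a relation of degree $e-1\geq D+1$ among products in $V_{\le e-1}$. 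Since $(F^{\operatorname{top}})_{e-1}=R_{n,e-1}$, the lower-degree tails $x_j\cdot(\text{lower part})$ can be expressed using $V_{\leq e-1}$. This yields a representation of $g$ of degree $e-1$, contradicting minimality. Hence for every $g\in(F)$ of degree $\le d$ its leading term appears among leading terms of $V_{\leq d}$. For $g$ of degree $>d$ we also need $\operatorname{LT}(g)\in(\operatorname{LT}(V_{\le d}))$. This follows because in degree $d-1\geq D$ all monomials are leading terms of elements of $V_{\le d}$: top parts of $F$-combinations cover $R_{n,D}$, and the lower-order tails are absorbed by the extra $+1$.

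The hard step is the syzygy reduction. Syzygies of $F^{\operatorname{top}}$ need not be Koszul, so in degree $e$ I must genuinely use surjectivity in degree $D$ to rewrite the cancelled top part. I would first try to handle it via the standard argument that $V_{\le e}\cap R_{\le e-1}\subseteq V_{\le e-1}$ for $e\ge D+2$. That argument goes by comparing dimensions: $\dim V_{\le e}$ is bounded by counting, using that $(F^{\operatorname{top}})$ is everything from degree $D$ on. This is exactly where the shift $D+1$ arises.
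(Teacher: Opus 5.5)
Your reduction to leading terms is fine, and so is the observation that every monomial of degree $\ge D$ is a leading term of $V_{\le D}$. The gap is the degree-fall control, and it cannot be repaired. Two claims you rely on are false: the inclusion $V_{\le e}\cap\{g:\deg g\le e-1\}\subseteq V_{\le e-1}$ for $e\ge D+2$, and its consequence that every $g\in(F)$ with $\deg g\le d$ lies in $V_{\le d}$.

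Take $F=\{f_1,f_2,f_3\}=\{x^2,\,xy+1,\,y^2\}\subseteq\F[x,y]$.
\begin{itemize}
\item $F^{\operatorname{top}}$ spans all quadrics, so $D=2$ and $d=3$.
\item $(F)=(1)$, since $xf_2-yf_1=x$ and $f_2-xy\,f_2+y^2f_1=1$.
\item $V_{\le 3}$ is spanned by $x^2,x^3,x^2y,xy+1,x,y,y^2,xy^2,y^3$. Every element of it therefore has the same coefficient at $1$ as at $xy$, so $1\notin V_{\le 3}$.
\item On the other hand, $1\in V_{\le 4}$ by the second identity above.
\end{itemize}
This violates your inclusion at $e=4=D+2$. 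It also shows that, with the plain Macaulay-matrix definition recalled in this paper, $d_{\operatorname{sol}}(F)=4>3=\max\{d_{\operatorname{reg}}(F)+1,\deg f_i\}$ for every degree-compatible order. So no argument about the row space $V_{\le d}$ alone can give the displayed bound.

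Your descent breaks where you invoke surjectivity. The equality $(F^{\operatorname{top}})_e=R_{n,e}$ for $e\ge D$ controls only top-degree parts.
\begin{itemize}
\item After cancellation, the individual $s_j$ are not syzygies; only $\sum_j x_js_j$ has vanishing degree-$e$ part.
\item Using the Koszul relations among the $x_j$ and surjectivity in degree $e-2\ge D$, you can replace them by $s'_j\in V_{\le e-1}$ with $\deg s'_j\le e-2$.
\item But $x_js'_j\in V_{\le e-1}$ then needs $s'_j\in V_{\le e-2}$. That is the same statement one level down, and the recursion has no base case.
\end{itemize}
Lifting low-degree syzygies of $F^{\operatorname{top}}$ produces exactly such fallen elements. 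For example, $yf_1-xf_2=-x$ lies in $V_{\le 3}\setminus V_{\le 2}$, yet $y\cdot x$ is not available inside $V_{\le 3}$. No dimension count fixes this.

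What does hold with $d=\max\{D+1,\deg f_i\}$ is the statement for the closed space $W\supseteq V_{\le d}$. This is the smallest subspace stable under $w\mapsto x_iw$ whenever $\deg(x_iw)\le d$; it is the space behind the last fall degree and mutant-type algorithms. In the example, $W$ contains $yx$ and hence $1$. Here the $+1$ is used for closure, not to absorb tails:
\begin{itemize}
\item Every monomial of degree in $[D,d]$ is a leading term of $W$.
\item By closure, the monomials of degree $\le d$ that are not leading terms of $W$ are exactly the set $\mathcal O$ of monomials of degree $\le D-1$ outside $(\operatorname{LT}(W))$. Hence polynomials of degree $\le d$ decompose as $W\oplus\langle\mathcal O\rangle$.
\item The maps $M_i(o)=\operatorname{NF}(x_io)$ commute. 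Indeed, $x_jo-\operatorname{NF}(x_jo)\in W$ has degree $\le D\le d-1$, so it may be multiplied by $x_i$ inside $W$.
\item Therefore $p\mapsto p(M_1,\dots,M_n)(1)$ is a surjection $R_n\to\langle\mathcal O\rangle$ that agrees with $\operatorname{NF}$ in degree $\le d$, so it kills $(F)$. This gives $\dim R_n/(F)\ge|\mathcal O|$.
\item Since $(\operatorname{LT}(W))\subseteq\operatorname{in}_\sigma((F))$ gives the reverse inequality, $(\operatorname{LT}(W))=\operatorname{in}_\sigma((F))$.
\end{itemize}
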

Since \texttt{degrevlex} is degree-compatible, we can use this result to estimate the solving degree, and thus the complexity of solving polynomial systems. This will allow us to compare different polynomial models. 
%Other estimations of the solving degree, such as the ones given in~\cite{salizzoni2023upperboundsolvingdegree, cryptoeprint:2019/903,tenti2019sufficiently}, are not suitable for this work, since the requirement on polynomial degrees will not always be matched. 

\subsection{Previous polynomial models for SDP}
There are several approaches to solve the SDP, one of the most prominent being \textit{Information Set Decoding}. The first algorithm was proposed by Prange~\cite{Prange1962TheUO}, followed by several subsequent improvements and variants (for more details, see e.g.~\cite{weger2024surveycodebasedcryptography}). 

Since the focus of this work is to provide a polynomial model to solve the SDP, we focus on known polynomial models. The first version by Meneghetti, Pellegrini, and Sala in~\cite{MPSmodels} was later improved by Caminata, Cartor, Meneghetti, Mora, and Pellegrini in~\cite{caminata2025quadraticmodelingssyndromedecoding}. Given an instance $(\mathbf{H},s,t)$ of the SDP over $\F_2$, \cite{MPSmodels} uses two different types of encoding variables: the variables $x_i$ are the entries of a solution vector $x$ of the SDP, while the variables $y_{i,j}$ are the digits of the binary expansion of the Hamming weight of $x$. The model involves the following polynomial equations:
\begin{itemize}
    \item the \textit{parity check encoding}, given by the linear equations $\mathbf{H}\cdot x^\top=s$,
    \item the \textit{Hamming weight computation encoding}, which forces the entries of the vector $y_i=(y_{i,1},\dots,y_{i,\fl*{\operatorname{log}_2 n}+1})$ to be the digits of the binary expansion of the Hamming weight of the truncated vector $(x_1,\dots,x_i)$, for $i=1,\dots,n$,
    \item the \textit{weight constraint encoding}, which enforces the constraint $\operatorname{wt}(x)\leq t$ or $\operatorname{wt}(x)=t$; 
    \item the \textit{field equations} $x_i^2=x_i$ and $y_{i,j}^2=y_{i,j}$.
\end{itemize}
The authors of~\cite{caminata2025quadraticmodelingssyndromedecoding} improve the model by lowering the degrees of the equations involved in the Hamming weight computation encoding and in the weight constraint encoding. They are able to produce a system whose equations have degree at most two. Moreover, after some linearization steps, the resulting system involves fewer variables and fewer equations than the original.

They also study the complexity of their model applied to the ESDP over $\F_2$. Combining the result of~\cite[Remark 8]{caminata2025quadraticmodelingssyndromedecoding} on the degree of regularity of the polynomial system $S_{\operatorname{QM}}=0$ %for some instance $(\mathbf{H},s,t)$ 
and Theorem~\ref{thm:salizz}, one obtains the following result:
\begin{theo}\label{teoSQM}
    The solving degree of $S_{\operatorname{QM}}$ is bounded from above by
    \begin{equation*}
        d_{\operatorname{sol}}(S_{\operatorname{QM}})\leq\cl*{\frac{n-1}{2}}+\fl*{\operatorname{log}_2t}\cl*{\frac{n-2}{2}}+2.
    \end{equation*}
\end{theo}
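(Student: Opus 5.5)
The bound follows by combining two ingredients the paper already cites: the degree of regularity bound of \cite[Remark 8]{caminata2025quadraticmodelingssyndromedecoding} for $S_{\operatorname{QM}}$, and Theorem~\ref{thm:salizz}. The argument needs no new algebra. It only requires checking that the hypotheses of Theorem~\ref{thm:salizz} are met and that the maximum in its statement is attained by the regularity term.

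First, I recall the structure of $S_{\operatorname{QM}}$ for the ESDP over $\F_2$. By construction in \cite{caminata2025quadraticmodelingssyndromedecoding}, every polynomial in $S_{\operatorname{QM}}$ has degree at most $2$. These are the linear parity-check equations, the quadratic Hamming-weight and weight-constraint encodings, and the field equations. I then quote \cite[Remark 8]{caminata2025quadraticmodelingssyndromedecoding}, which gives
\begin{equation*}
    d_{\operatorname{reg}}(S_{\operatorname{QM}})\leq\cl*{\frac{n-1}{2}}+\fl*{\operatorname{log}_2t}\cl*{\frac{n-2}{2}}+1 .
\end{equation*}
This is the bound on the degree of regularity after the linearization steps that eliminate variables. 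In the remark, the two summands come from the $x$-block of variables and from the $\fl*{\operatorname{log}_2 t}$ blocks of carry variables, respectively.

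Next, I apply Theorem~\ref{thm:salizz} with $\sigma=$\texttt{degrevlex}, which is degree-compatible. This yields
\begin{equation*}
    d_{\operatorname{sol}}(S_{\operatorname{QM}})\leq\max\{d_{\operatorname{reg}}(S_{\operatorname{QM}})+1,\,2\}.
\end{equation*}
For $n\geq 2$ we have $\cl*{\frac{n-1}{2}}\geq 1$, so the bound on $d_{\operatorname{reg}}(S_{\operatorname{QM}})+1$ is at least $2$. The maximum is therefore dominated by the regularity term, and substituting the quoted bound gives the claimed inequality.

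The main obstacle is bookkeeping in the citation rather than mathematics. I must confirm that the cited remark states the bound as $\cl*{\frac{n-1}{2}}+\fl*{\operatorname{log}_2t}\cl*{\frac{n-2}{2}}+1$, so that adding $1$ gives exactly the $+2$ in the statement. I must also confirm that the degree of regularity in the remark refers to the same (linearized) system $S_{\operatorname{QM}}$, with the same convention $d_{\operatorname{reg}}$ defined via $(F^{\operatorname{top}})\cap R_{n,d}=R_{n,d}$. If the remark uses a different convention, for instance one that includes the field equations only in the top-degree part, I would check that the top-degree ideal is unchanged. This holds because the field equations $x_i^2-x_i$ have top part $x_i^2$ in both conventions.
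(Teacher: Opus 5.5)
Your proposal matches the paper's argument: the statement is obtained by taking the degree-of-regularity bound from \cite[Remark 8]{caminata2025quadraticmodelingssyndromedecoding} and inserting it into Theorem~\ref{thm:salizz}, where the maximum is attained by the regularity term because every equation of $S_{\operatorname{QM}}$ has degree at most $2$. Your description of which summand comes from which block of variables is speculative, but the argument does not depend on it.
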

Moreover, the authors of~\cite{caminata2025quadraticmodelingssyndromedecoding} provide a variant of their system for the SDP that is valid for any $q\geq 2$. We briefly discuss it in Section~\ref{secSizeq>2}. The next section focuses on the binary case, which is the most relevant from a computational and cryptographic point of view. For example, it is the parameter choice in Classic McEliece~\cite{ClassicMcEliece2022}.

\section{New models for the ESDP via elementary symmetric polynomials}\label{secCore}

In this section, we propose two models for the ESDP that use elementary symmetric polynomials. Our goal is to produce a polynomial system that can be solved with lower complexity than the one in Theorem~\ref{teoSQM}. We focus on the binary case and let $q=2$ and $R_n=\F_2[x_1,\dots,x_n]$.

\subsection{A first model for the ESDP}\label{secFirstModel}

We start by recalling the definition of elementary symmetric polynomial. A polynomial in $R_n$ is \textit{symmetric} if it is invariant under any permutation of the variables $x_1,\ldots,x_n$, while an ideal is \textit{symmetric} if it contains the polynomials obtained from any polynomial in the ideal by permuting the variables. 
\begin{defin}
    Let $n\geq 1$ and $j\in[n]$. The \textit{elementary symmetric polynomial} of degree $j$ in the variables $x_1,\dots,x_n$ is
    \begin{equation*}
        \es{n}{j}(x) = \sum_{\alpha\subseteq[n],\,|\alpha|=j}\hspace{-0.2cm}x_\alpha
        =\sum_{1\leq\alpha_1<\ldots<\alpha_j\leq n}\hspace{-0.3cm}x_{\alpha_1}\cdots x_{\alpha_i}\in R_{n,j}.
    \end{equation*}
\end{defin}
The same definition applies over any field. It is useful to also define $\es{n}{0}=1$ and $\es{n}{j}=0$ for every $j>n$. %We will discuss the nice properties of the elementary symmetric polynomials $\es{n}{j}$ when needed. 

%Before showing the link between elementary symmetric polynomials and the weight problem, 
We start by recalling Lucas' Theorem~\cite[417-420]{lucas1891theoriedesnombres}, which allows one to compute binomial coefficients modulo a prime number.

\begin{theo}[Lucas' Theorem]\label{thm:lucas}
    Let $p$ be a prime, and let
    \begin{align*}
        m &= m_0 + m_1p + m_2p^2 + \dots + m_rp^r,\\
        n &= n_0 + n_1p + n_2p^2 + \dots + n_rp^r
    \end{align*}
    such that $0\leq m_i, n_i < p$ for every $i=0,\dots,r$ (named \textnormal{p-adic expansion} of $m$ and $n$, or just \textnormal{binary expansion} if $p=2$). Then
    \begin{equation*}
        \binom{m}{n} = \prod_{i=0}^r \binom{m_i}{n_i}\pmod{p}.
    \end{equation*}
\end{theo}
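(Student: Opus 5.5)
The approach is to compare coefficients in a polynomial identity over $\F_p$, the field with $p$ elements. The key tool is the Frobenius identity $(1+X)^p=1+X^p$ in $\F_p[X]$. It follows from the binomial theorem, since $p$ divides $\binom{p}{j}$ for $0<j<p$: in the expression $p!/(j!(p-j)!)$ the prime $p$ divides the numerator but not the denominator. By induction on $i$, raising to the $p$-th power repeatedly gives $(1+X)^{p^i}=1+X^{p^i}$ for every $i\geq 0$.

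Next, write $m=\sum_{i=0}^r m_ip^i$ using the base-$p$ expansion from the statement. In $\F_p[X]$ we then have
\begin{equation*}
(1+X)^m=\prod_{i=0}^r\left((1+X)^{p^i}\right)^{m_i}=\prod_{i=0}^r\left(1+X^{p^i}\right)^{m_i}=\prod_{i=0}^r\sum_{j_i=0}^{m_i}\binom{m_i}{j_i}X^{j_ip^i}.
\end{equation*}
The coefficient of $X^n$ on the left-hand side is $\binom{m}{n}$ reduced modulo $p$.

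On the right-hand side, expanding the product gives a sum of monomials $X^{\sum_i j_ip^i}$ with $0\leq j_i\leq m_i<p$. By uniqueness of the base-$p$ expansion, each exponent arises from exactly one tuple $(j_0,\dots,j_r)$. So the coefficient of $X^n$ is $\prod_i\binom{m_i}{n_i}$ when $n_i\leq m_i$ for all $i$, and $0$ otherwise. In the latter case some $n_i>m_i$, hence $\binom{m_i}{n_i}=0$, and the product formula still holds. The same reasoning covers $n>m$: some digit $n_i$ then exceeds $m_i$, so both sides vanish.

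The only delicate point is this uniqueness-of-digits step. It requires the bound $j_i<p$ on each digit, which is guaranteed because $j_i\leq m_i<p$.
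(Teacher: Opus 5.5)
The paper does not prove this statement: it recalls Lucas' Theorem as a classical result with a citation to Lucas, and uses only the case $p=2$ later. There is therefore no internal argument to compare yours against. Your proof is the standard generating-function argument ($(1+X)^{p^i}=1+X^{p^i}$ in $\F_p[X]$, the factorization $(1+X)^m=\prod_i(1+X^{p^i})^{m_i}$, and uniqueness of base-$p$ digits to read off the coefficient of $X^n$), and it is correct and complete, including the cases where some $n_i>m_i$ or $n>m$.
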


Next, we show how to express the weight of a vector via the values of the elementary symmetric polynomials at its entries.

\begin{lem}\label{lemESPweight}
    Let $n\geq 1$ and $r=\fl*{\operatorname{log}_2 n}$. For $0\leq t\leq n$, let $t = t_0 + t_12 + \dots + t_r2^r$ be the binary expansion of $t$. For a vector $y\in\F_2^n$, the following conditions are equivalent:
    \begin{enumerate}[label=\arabic*)]
        \item $\operatorname{wt}(y)=t$,
        \item $\es{n}{j}(y)=\binom{t}{j}$ for all $j=1,\dots,n$,
        \item $\es{n}{2^i}(y)=t_i$ for all $i=0,\dots,r$.
    \end{enumerate}
\end{lem}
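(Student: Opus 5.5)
The key observation is a counting identity. For $y\in\F_2^n$ with support $\operatorname{supp}(y)$ of size $w=\operatorname{wt}(y)$, a squarefree monomial $x_\alpha$ evaluates to $1$ at $y$ exactly when $\alpha\subseteq\operatorname{supp}(y)$, and to $0$ otherwise. Hence, for every $j\geq 0$,
\begin{equation*}
    \es{n}{j}(y)=\#\{\alpha\subseteq\operatorname{supp}(y)\mid |\alpha|=j\}=\binom{w}{j}\pmod 2 .
\end{equation*}
I would prove this identity first and then deduce the three implications from it, together with Lucas' Theorem (Theorem~\ref{thm:lucas}).

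The implication 1)$\Rightarrow$2) is immediate from the identity with $w=t$. For 2)$\Rightarrow$3), I would take $j=2^i$ for $i=0,\dots,r$; this index is admissible because $2^i\leq 2^r\leq n$. By Lucas' Theorem with $p=2$, the binary expansion of $2^i$ has a single digit $1$, in position $i$. Therefore $\binom{t}{2^i}\equiv\binom{t_i}{1}\prod_{l\neq i}\binom{t_l}{0}=t_i\pmod 2$.

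The implication 3)$\Rightarrow$1) is where the content lies. Let $w=\operatorname{wt}(y)=w_0+w_12+\dots+w_r2^r$; since $w\leq n<2^{r+1}$, the same number $r+1$ of binary digits suffices for $w$. The identity and Lucas' Theorem give $\es{n}{2^i}(y)=\binom{w}{2^i}\equiv w_i\pmod 2$. Hypothesis 3) then says $w_i=t_i$ for all $i=0,\dots,r$. Since both $w$ and $t$ lie in $[0,2^{r+1})$, their binary expansions coincide, so $w=t$.

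There is no real obstacle here. The only points needing care are checking that $r=\fl*{\operatorname{log}_2 n}$ gives enough digits to represent both $w$ and $t$, and that every index $2^i$ used lies in $\{1,\dots,n\}$, so that $\es{n}{2^i}$ is a genuine elementary symmetric polynomial rather than the zero convention $\es{n}{j}=0$ for $j>n$. One should also note that the equalities in 2) and 3) are read in $\F_2$, that is, with binomial coefficients taken modulo $2$.
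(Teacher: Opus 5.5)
Your proof is correct and follows essentially the same route as the paper. The counting identity $\es{n}{j}(y)\equiv\binom{\operatorname{wt}(y)}{j}\pmod 2$ gives 1)$\Rightarrow$2), Lucas' Theorem gives 2)$\Rightarrow$3), and 3)$\Rightarrow$1) follows by applying both to the actual weight of $y$ and comparing binary digits; your checks that $2^i\leq n$ and $\operatorname{wt}(y)<2^{r+1}$ are the right points of care.
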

\begin{proof}
    $\textit{"1)}\Rightarrow\textit{2)"}$: Let $y=(y_1,\dots,y_n)\in\F_2^n$ be a vector of weight $t$ and $T\subseteq [n]$ be the set of the indexes corresponding to the non-zero entries of $y$. %(so $|T|=t$). 
    Then
    \begin{align*}
        \es{n}{j}(y)
        &=\sum_{1\leq\alpha_1<\ldots<\alpha_j\leq n}\hspace{-0.5cm}y_{\alpha_1}\cdots y_{\alpha_j}
        =0+\hspace{-0.2cm}\sum_{\substack{1\leq\alpha_1<\ldots<\alpha_j\leq n,\\ \alpha_1,\ldots,\alpha_j\in T}}\hspace{-0.2cm}1=\\
        &=\sum_{\alpha\subseteq T,\,|\alpha|=j}\hspace{-0.2cm}1
        =|\{\alpha\subseteq T\mid |\alpha|=j\}|
        =\binom{t}{j}\pmod{2}.
    \end{align*}
    
    $\textit{"2)}\Rightarrow\textit{3)"}$: By Theorem~\ref{thm:lucas}, for every $i=0,\dots,r$ we have
    \begin{align*}
        \es{n}{2^i}(y)=\binom{t}{2^i}=\binom{t_i}{1}\cdot\prod_{\ell=0,\,\ell\neq i}^r \binom{t_\ell}{0}%=\binom{t_i}{1}\cdot\prod_{\ell=0,\,\ell\neq i}^r\hspace{-0.2cm} 1=\binom{t_i}{1}
        =t_i,
    \end{align*}
    and thus the thesis.
    
    $\textit{"3)}\Rightarrow\textit{1)"}$: Let $y\in\F_2^n$ be such that $\es{n}{2^i}(y)=t_i$ for all $i=0,\dots,r$ and suppose that $\operatorname{wt}(y)=t'$. Thanks to $\textit{"1)}\Rightarrow\textit{2)"}$, we have that $\es{n}{j}(y)=\binom{t'}{j}$ for all $j=1,\dots,n$. Therefore, for every $i\in\{0,\ldots,r\}$, one has
    \begin{align*}
        t_i=\es{n}{2^i}(y)=\binom{t'}{2^i}=t'_i
    \end{align*}
    where the last equality follows from the proof of $\textit{"2)}\Rightarrow\textit{3)"}$. We conclude that $t=t'$.
    
\end{proof}

This result allows us to express the weight of a vector via the evaluations of the elementary symmetric polynomials at its entries, and therefore, to produce a new model for the ESDP over $\F_2$. 

Given a parity check matrix $\mathbf{H}=(h_{i,j})_{i=1,\dots,n-k}^{j=1,\dots,n}$, a syndrome $s=(s_i)_{i=1,\dots,n-k}$ and a target weight $0\leq t\leq n$ with binary expansion $t = \sum\limits_{\ell=0}^r t_\ell2^\ell$, where $r=\fl*{\operatorname{log}_2 n}$, the corresponding instance $(\mathbf{H},s,t)$ of the ESDP can be modelled via the following sets of polynomials in $R_n$:    
\begin{itemize}
    \item \textbf{parity check encoding}:
    \begin{equation*}
        S_1 = \left\{\sum_{j=1}^n h_{i,j}x_j + s_i\mid i=1,\dots,n-k\right\},
    \end{equation*}
    \item \textbf{finite field equations}:
    \begin{equation*}
        S_2 = \{x_j^2+x_j\mid j=1,\dots,n \},
    \end{equation*}
    \item \textbf{weight constraint encoding}:
    \begin{equation*}
        S_3 = \{\es{n}{2^i}(x)+t_i\mid i=0,\dots,r \},\quad r=\lfloor\operatorname{log}_2n\rfloor.
    \end{equation*}
\end{itemize}
Let $S=S_1\cup S_2\cup S_3$. The solutions of the polynomial system $S=0$ coincide with the solutions of the instance $(\mathbf{H},s,t)$ of the ESDP.\\

We now estimate the computational complexity of this model. As discussed in Section~\ref{sec:posso}, it suffices to determine the degree of regularity of $S$.

For that, we consider $S^{\operatorname{top}}=S_1^{\operatorname{top}}\cup S_2^{\operatorname{top}}\cup S_3^{\operatorname{top}}$, where 
\begin{align*}
    S_1^{\operatorname{top}} &= \left\{\sum_{j=1}^n h_{i,j}x_j\mid i=1,\dots,n-k\right\},\\
    S_2^{\operatorname{top}} = \{x_j^2\mid j&=1,\dots,n \},\quad S_3^{\operatorname{top}} = \{\es{n}{2^i}(x)\mid i=0,\dots,r\}.
\end{align*}
Notice that $S^{\operatorname{top}}$ is independent of the target weight $t$. Moreover, $S_2^{\operatorname{top}}$ and $S_3^{\operatorname{top}}$ do not depend on the specific instance $(\mathbf{H},s,t)$ of the ESDP, but only depend on the parameter $n$. For this reason, we first focus on the degree of regularity of $S_2^{\operatorname{top}}\cup S_3^{\operatorname{top}}$.

\begin{notat}
    Let $Q_n=(x_1^2,\dots,x_n^2)\subseteq R_n$ be the ideal generated by the squares of the variables. Let $r=\lfloor\operatorname{log}_2n\rfloor$ and let $I_n=(\es{n}{2^i}\mid i=0,\dots,r)+Q_n\subseteq R_n$ be generated by the polynomials in $S_2^{\operatorname{top}}$ and $S_3^{\operatorname{top}}$. 
\end{notat}

Next we recall some properties of the elementary symmetric polynomials $\es{n}{j}\in R_n$. It is easy to check that they can be computed recursively for $n>1$ using the formulas
\begin{equation}\label{eq:enj_recursion}
    \es{n}{j} =
    \begin{cases}
        \es{n-1}{1} + x_n & \text{if } j = 1, \\
        \es{n-1}{j} + x_n\cdot\es{n-1}{j-1} & \text{if } 2\leq j\leq n-1, \\
        \es{n-1}{n-1}\cdot x_n & \text{if } j = n,
    \end{cases}
\end{equation}
where $\es{n-1}{j}=\hspace{-0.1cm}\sum\limits_{\alpha\subseteq[n-1],\,|\alpha|=j}\hspace{-0.3cm}x_\alpha$ is the elementary symmetric polynomial of degree $j$ in the first $n-1$ variables $x_1,\dots,x_{n-1}$ and $\es{1}{1}$ is equal to $x_1$. %Notice that we could write similar recursive formulas for a different choice of variables.

We also have the following 
\begin{lem}\label{lem:belprodottoesp}
    Let $j,j'\in[n]$, with binary expansions $j = j_0 + j_12 + \dots + j_r2^r$ and $j' = j'_0 + j'_12 + \dots + j'_r2^r$. Then
    \begin{equation*}
        \es{n}{j}\cdot\es{n}{j'}=
        \begin{cases}
            0 & \text{if }j+j'>n\text{ or there exists $\ell$ such that }j_\ell=1=j_\ell',\\
            \es{n}{j+j'} & \text{else},
        \end{cases}
    \end{equation*}
modulo $Q_n$.
\end{lem}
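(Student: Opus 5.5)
Expand the product directly and work modulo $Q_n$. By definition,
\begin{equation*}
    \es{n}{j}\cdot\es{n}{j'}=\sum_{\substack{\alpha,\beta\subseteq[n]\\ |\alpha|=j,\,|\beta|=j'}}x_\alpha x_\beta .
\end{equation*}
Since $Q_n$ is a monomial ideal generated by the squares $x_i^2$, a term $x_\alpha x_\beta$ lies in $Q_n$ exactly when $\alpha\cap\beta\neq\emptyset$. Modulo $Q_n$ only the pairs with $\alpha\cap\beta=\emptyset$ survive, and for such a pair $x_\alpha x_\beta=x_{\alpha\cup\beta}$ is squarefree of degree $j+j'$.

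\textbf{Case $j+j'>n$.} No disjoint pair exists, so the product is $0$ modulo $Q_n$.

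\textbf{Case $j+j'\le n$.} Group the surviving terms by $\gamma=\alpha\cup\beta$, which ranges over the subsets of $[n]$ with $|\gamma|=j+j'$. For a fixed such $\gamma$, the ordered pairs $(\alpha,\beta)$ of disjoint sets with $|\alpha|=j$ and $\alpha\cup\beta=\gamma$ correspond bijectively to the $j$-subsets $\alpha$ of $\gamma$, since $\beta=\gamma\setminus\alpha$. Hence the coefficient of $x_\gamma$ is $\binom{j+j'}{j}$, reduced mod $2$, and
\begin{equation*}
    \es{n}{j}\cdot\es{n}{j'}\equiv\binom{j+j'}{j}\,\es{n}{j+j'}\pmod{Q_n}.
\end{equation*}

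It remains to show that $\binom{j+j'}{j}$ is odd exactly when the binary expansions of $j$ and $j'$ share no $1$. This is the only genuinely arithmetic step. We handle it with Lucas' Theorem~\ref{thm:lucas} for $p=2$, taking $m=j+j'$ and $n=j$.

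If $j_\ell j'_\ell=0$ for every $\ell$, the addition $j+j'$ produces no carries. The binary digits of $j+j'$ are then $j_\ell+j'_\ell$, each factor $\binom{j_\ell+j'_\ell}{j_\ell}$ equals $1$, and the binomial coefficient is $1$ mod $2$.

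Otherwise, let $\ell$ be the least index with $j_\ell=j'_\ell=1$. No carry enters position $\ell$, because below $\ell$ the digits never coincide. So the $\ell$-th digit of $j+j'$ is $0$, while $j_\ell=1$, giving the factor $\binom{0}{1}=0$. Thus $\binom{j+j'}{j}\equiv0$.

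Alternatively, one could apply Kummer's theorem: the $2$-adic valuation of $\binom{j+j'}{j}$ equals the number of carries in $j+j'$. Lucas suffices and is already available in the paper.

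Finally, the convention $\es{n}{m}=0$ for $m>n$ makes the two cases consistent with the displayed statement. In characteristic $2$ every coefficient is $0$ or $1$, which completes the proof.
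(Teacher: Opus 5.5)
Your proof is correct and follows the paper's argument: modulo $Q_n$ only the disjoint pairs survive, grouping them by $\gamma=\alpha\cup\beta$ gives the coefficient $\binom{j+j'}{j}$, and Lucas' Theorem decides its parity. Your least-index argument, which shows that no carry reaches the first shared $1$, justifies the Lucas step more carefully than the paper does; the paper applies Lucas digitwise to $j_i+j'_i$ without accounting for carries.
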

\begin{proof}
    By definition 
    \begin{equation*}
        \es{n}{j}\cdot\es{n}{j'} =\left(\sum_{\alpha\subseteq[n],\,|\alpha|=j}\hspace{-0.3cm}x_\alpha\right)\left(\sum_{\beta\subseteq[n],\,|\beta|=j'}\hspace{-0.3cm}x_\beta\right)
        =\sum_{\substack{\alpha,\beta\subseteq[n],\\|\alpha|=j,\,|\beta|=j'}}\hspace{-0.3cm}x_\alpha x_\beta.
    \end{equation*}
    
    If $j+j'>n$, then $x_\alpha$ and $x_\beta$ share a variable for every $\alpha,\beta\subseteq [n]$. Therefore $x_\alpha x_\beta\in Q_n$ and $\es{n}{j}\cdot\es{n}{j'}=0$ modulo $Q_n$.

    If $j+j'\leq n$, then modulo $Q_n$
    \begin{align*}
        \es{n}{j}\cdot\es{n}{j'}
        &=\sum_{\substack{\alpha,\beta\subseteq[n],\\|\alpha|=j,\,|\beta|=j'}}\hspace{-0.3cm}x_\alpha x_\beta
        =\sum_{\substack{\alpha,\beta\subseteq[n],\,\alpha\,\cap\,\beta=\varnothing,\\|\alpha|=j,\,|\beta|=j'}}\hspace{-0.3cm}x_\alpha x_\beta=\\
        &=\sum_{\gamma\subseteq[n],\,|\gamma|=j+j'}\left(\sum_{\substack{\alpha,\beta\subseteq[n],\,|\alpha|=j,\,|\beta|=j',\\\alpha\,\cap\,\beta=\varnothing,\,\alpha\,\cup\,\beta=\gamma}}\hspace{-0.5cm}x_\alpha x_\beta\right)=\\
        &=\sum_{\gamma\subseteq[n],\,|\gamma|=j+j'}\left(\lvert\{(\alpha,\beta)\mid\substack{\alpha,\beta\subseteq [n],\,|\alpha|=j,\,|\beta|=j',\\\alpha\,\cap\,\beta=\varnothing,\,\alpha\,\cup\,\beta=\gamma}\}\rvert\cdot\, x_\gamma\right)=\\
        &=\sum_{\gamma\subseteq[n],\,|\gamma|=j+j'}\left(\lvert\{\alpha\subseteq [n]\mid|\alpha|=j,\,\alpha\subseteq\gamma\}\rvert\cdot\, x_\gamma\right)=\\
        &=\sum_{\gamma\subseteq[n],\,|\gamma|=j+j'}\binom{j+j'}{j}x_\gamma
        =\binom{j+j'}{j}\es{n}{j+j'}.
    \end{align*}
    By Lucas' Theorem (Theorem~\ref{thm:lucas})
    \begin{equation*}
        \binom{j+j'}{j}=\prod_{i=0}^r\binom{j_i+j'_i}{j_i}.
    \end{equation*}
    Therefore, if there exists $\ell$ such that $j_\ell=1=j_\ell'$, then $\binom{j_\ell+j'_\ell}{j_\ell}=\binom{0}{1}=0$, so $\binom{j+j'}{j}=0$ and $\es{n}{j}\cdot\es{n}{j'}=0$.
    If no such $\ell$ exists, then $\binom{j_i+j'_i}{j_i}=1$ for all $i$, hence $\binom{j+j'}{j}=1$ and $\es{n}{j}\cdot\es{n}{j'}=\es{n}{j+j'}$.
    
\end{proof}

The lemma implies that
\begin{equation}\label{eq:BuildingBlocksESP}
    \es{n}{j}=\es{n}{j_0 + j_12 + \dots + j_r2^r}
    =\es{n}{\sum\limits_{i:j_i=1}\hspace{-0.1cm}2^i}
    =\prod_{i:j_i=1}\es{n}{2^i},
\end{equation}
so $\es{n}{j}\in I_n$ for all $j\in[n]$. It follows that
\begin{equation*}
    I_n=(\es{n}{j}\mid j=1,\dots,n)+Q_n.
\end{equation*}
%which is consistent with the fact that we can express the weight constraint with all the elementary symmetric polynomials as in Lemma~\ref{lemESPweight}.\\

We now want to compute the degree of regularity of $I_n$. Since $I_n$ is generated by homogeneous polynomials, it is a homogeneous ideal, and we can write it as $I_n=I_{n,0}\oplus I_{n,1}\oplus\dots\oplus I_{n,j}\oplus\dots$, where $I_{n,j}=I_n\cap R_{n,j}$ is the $\mathbb{F}_2$-vector space of homogeneous polynomials of $I_n$ of degree $j$. Therefore, the quotient $A_n=R_n/I_n$ is graded via $A_n=A_{n,0}\oplus A_{n,1}\oplus\dots\oplus A_{n,j}\oplus\dots$, where $A_{n,j}=R_{n,j}/I_{n,j}$ is an $\mathbb{F}_2$-vector space. Let $a_{n,j}=\operatorname{dim}_{\F_2} A_{n,j}$.

In order to compute $d_{\operatorname{reg}}(I_n)$, we need to determine the largest $j$ such that $a_{n,j}\neq 0$. We prove a stronger result, some of which may be useful in the sequel. We start with a preliminary lemma.
\begin{lem}\label{lemEspcascata}
    Let $n\geq 3$ and $R_n=\F[x_1,\dots,x_n]$ be the polynomial ring in $n$ variables over a field $\F$. Let $I\subseteq R_n$ be an ideal such that $Q_n=(x_1^2,\dots,x_n^2)\subseteq I$. For every $3\leq m\leq n$ and $2\leq j\leq m-1$,
    \begin{equation*}
        \text{if }\es{m}{j-1}\text{ and }\es{m}{j}\in I\text{, then }\es{m-1}{j}\in I.
    \end{equation*}
\end{lem}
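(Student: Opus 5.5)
The plan is to use the recursion \eqref{eq:enj_recursion} together with multiplication by $x_m$ and the fact that $x_m^2\in Q_n\subseteq I$. Fix $3\leq m\leq n$ and $2\leq j\leq m-1$. The case $j\leq m-2$ and the case $j=m-1$ differ slightly: for $j\le m-1$ the middle line of the recursion applies to $\es{m}{j}$, and for $j-1\geq 1$ the first or middle line applies to $\es{m}{j-1}$.

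First I would write
\begin{equation*}
\es{m}{j}=\es{m-1}{j}+x_m\,\es{m-1}{j-1}
\end{equation*}
and, since $j-1\leq m-2$,
\begin{equation*}
\es{m}{j-1}=\es{m-1}{j-1}+x_m\,\es{m-2+1}{j-2}\quad(\text{or } \es{m-1}{1}+x_m \text{ if } j=2),
\end{equation*}
where for $j=2$ I use the convention $\es{m-1}{0}=1$, so the formula is uniform. The goal is then to isolate $\es{m-1}{j}$ modulo $I$. From the first identity it suffices to show $x_m\es{m-1}{j-1}\in I$.

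Multiply the second identity by $x_m$: $x_m\es{m}{j-1}=x_m\es{m-1}{j-1}+x_m^2\es{m-1}{j-2}$. The left side lies in $I$ since $\es{m}{j-1}\in I$, and $x_m^2\es{m-1}{j-2}\in Q_n\subseteq I$. Hence $x_m\es{m-1}{j-1}\in I$, and therefore $\es{m-1}{j}=\es{m}{j}-x_m\es{m-1}{j-1}\in I$. Signs are irrelevant in characteristic $2$, but the argument works over any field if written with minus signs, since the recursion holds over $\mathbb{Z}$.

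There is no real obstacle; the only care needed is the boundary bookkeeping. When $j=2$, the term $\es{m-1}{j-2}$ is the constant $1$ and the recursion for $\es{m}{1}$ is the first line, which is consistent. The hypothesis $j\leq m-1$ guarantees that $\es{m-1}{j}$ is a genuine elementary symmetric polynomial in $m-1$ variables and that the middle recursion line is used, so the proof goes through uniformly.
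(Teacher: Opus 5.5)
Your proof is correct and is essentially the paper's argument. The paper also substitutes the recursion for $\es{m}{j-1}$ into the one for $\es{m}{j}$, obtaining $\es{m-1}{j}=\es{m}{j}-x_m\es{m}{j-1}+x_m^2\es{m-1}{j-2}\in I$; this is your two steps in one line, and your index $\es{m-2+1}{j-2}$ is simply $\es{m-1}{j-2}$.
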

\begin{proof}
    Using \eqref{eq:enj_recursion} we have
    \begin{equation*}
        \es{m}{j}=\es{m-1}{j}+x_m\es{m-1}{j-1}\quad\text{and}\quad\es{m}{j-1}=\es{m-1}{j-1}+x_m\es{m-1}{j-2}.
    \end{equation*}  
    Since $\es{m-1}{0}=1$, we have
    \begin{align*}
        \es{m-1}{j}
        &=\es{m}{j}-x_m\es{m-1}{j-1}=\es{m}{j}-x_m(\es{m}{j-1}-x_m\es{m-1}{j-2})=\\
        &=\underbrace{\es{m}{j}}_{\in I}-x_m\underbrace{\es{m}{j-1}}_{\in I}+\underbrace{x_m^2}_{\in Q_n\subseteq I}\es{m-1}{j-2}\in I.
    \end{align*}    
\end{proof}

The next proposition follows from repeatedly applying Lemma~\ref{lemEspcascata}.
\begin{prop}\label{propAlgoESP}
    Let $n\geq 1$ and let $R_n=\F[x_1,\dots,x_n]$ be the polynomial ring in $n$ variables over a field $\F$. Let $I\subseteq R_n$ be an ideal such that $Q_n=(x_1^2,\dots,x_n^2)\subseteq I$. For every $0\leq u\leq n-1$,
    \begin{equation*}
        \text{if }\es{n}{u+1},\dots,\es{n}{n}\in I\text{, then }I\cap R_{n,j}=R_{n,j}\text{ for }j\geq\fl*{\frac{n+u}{2}}+1.
    \end{equation*}
In particular, $$d_{\operatorname{reg}}(I)\leq\fl*{\frac{n+u}{2}}+1.$$
\end{prop}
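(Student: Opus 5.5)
The plan is to descend from the elementary symmetric polynomials in all $n$ variables to those in fewer variables by repeatedly applying Lemma~\ref{lemEspcascata}. At the end, a single squarefree monomial appears as an elementary symmetric polynomial in its own variables. Throughout, the case $j>m$ is trivial because $\es{m}{j}=0$.

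\textbf{Step 1 (induction claim).} Let $m=n-s$. I will prove by induction on $s\geq 0$ that $\es{m}{j}\in I$ for all $j\geq u+1+s$.
\begin{itemize}
\item The case $s=0$ is the hypothesis.
\item For the inductive step, suppose the claim holds for $m\geq 3$ and let $j\geq u+1+(s+1)$.
\item If $j\leq m-1$, then $j-1\geq u+1+s$ and $j\geq 2$ (since $u\geq 0$). So both $\es{m}{j-1}$ and $\es{m}{j}$ lie in $I$, and Lemma~\ref{lemEspcascata} gives $\es{m-1}{j}\in I$.
\item If $j\geq m$, then $\es{m-1}{j}=0$ and there is nothing to prove.
\item The constraint $m\geq 3$ in the lemma only means the descent stops at two variables. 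I will check that this suffices: for $m\leq 2$ the required degrees exceed $m$ or reduce to trivial cases.
\end{itemize}

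\textbf{Step 2 (removing the ordering).} The lemma, through the recursion \eqref{eq:enj_recursion}, singles out $x_m$ as the variable being removed. However, $\es{n}{j}$ is symmetric, and the recursion holds with any variable playing the role of the last one. So at each stage I may delete an arbitrary variable instead of $x_m$, even though $I$ itself need not be symmetric.

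Hence for every subset $\gamma\subseteq[n]$ with $|\gamma|=m$, the elementary symmetric polynomial of degree $j$ in the variables $\{x_i\}_{i\in\gamma}$ lies in $I$ whenever $j\geq u+1+(n-m)$. I will state this as the induction hypothesis from the start, over all subsets, so that the relabeling is built in.

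\textbf{Step 3 (conclusion).} Let $d\geq\fl*{\frac{n+u}{2}}+1$, so that $2d\geq n+u+1$, that is, $d\geq u+1+(n-d)$. Every monomial of degree $d$ is of one of two kinds:
\begin{itemize}
\item If it is not squarefree, it lies in $Q_n\subseteq I$.
\item If it is squarefree, it equals $x_\gamma$ with $|\gamma|=d\leq n$. Apply Step 2 with $m=d$ and $j=d$; the condition $d\geq u+1+(n-d)$ holds. The elementary symmetric polynomial of top degree $d$ in the $d$ variables indexed by $\gamma$ is exactly $x_\gamma$, so $x_\gamma\in I$.
\end{itemize}
If $d>n$, every monomial of degree $d$ is non-squarefree. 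Therefore $I\cap R_{n,j}=R_{n,j}$ for all $j\geq\fl*{\frac{n+u}{2}}+1$.

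The bound on $d_{\operatorname{reg}}$ then follows from the definition of the degree of regularity, applied to the ideal $I$ (equivalently, to the top-degree parts of its generators, when $I$ is generated by homogeneous polynomials).

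\textbf{Main obstacle.} The main point needing care is the bookkeeping in Steps 1--2. The hypotheses $3\leq m$ and $2\leq j\leq m-1$ of Lemma~\ref{lemEspcascata} must be met at every stage, or the boundary cases must be covered by the trivial identity $\es{m}{j}=0$ for $j>m$ and by $Q_n\subseteq I$. The small cases $n\leq 2$ should also be checked directly.
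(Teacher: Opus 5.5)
Your proposal is correct and follows essentially the same route as the paper: you repeatedly apply Lemma~\ref{lemEspcascata} to descend until $x_1\cdots x_{\lfloor (n+u)/2\rfloor+1}$ lies in $I$, and you note that non-squarefree monomials lie in $Q_n$. The only difference is cosmetic: the paper obtains the other squarefree monomials by observing that the subideal $(\es{n}{u+1},\dots,\es{n}{n})+Q_n\subseteq I$ is symmetric and then permuting variables, whereas you build the relabeling into the induction over all variable subsets, which amounts to the same argument.
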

\begin{proof}
    The cases $n=1$ and $n=2$ can be checked directly.
    If $n\geq 3$, then we can apply the previous lemma recursively to obtain
    \begin{equation*}
        \begin{array}{c | c c c c c c c c}
             \text{Step 0} & \es{n}{u+1} & \es{n}{u+2} & \es{n}{u+3} & \dots & \es{n}{n-2} & \es{n}{n-1} & \es{n}{n} & \in I\\
             \text{Step 1} & / & \es{n-1}{u+2} & \es{n-1}{u+3} & \dots & \es{n-1}{n-2} & \es{n-1}{n-1} & / & \in I\\
             \text{Step 2} & / & / & \es{n-2}{u+3} & \dots & \es{n-2}{n-2} & / & / & \in I\\
             \vdots &  / & / & / & \vdots & / & / & / & \in I\\
        \end{array}
    \end{equation*}
    For $u+s+1\leq n-s$, that is for $2s\leq n-u-1$, at Step $s$ we have $\es{n-s}{u+s+1},\dots,\es{n-s}{n-s}\in I$. Choosing the largest possible value for $s$, i.e. for $s=\fl*{\frac{n-u-1}{2}}$,
    %\begin{itemize}
        %\item if $n-u \in 2\N$, then $\bar s=\frac{n-u}{2}-1$. In this case, $n-\bar s=n-\frac{n-u}{2}+1=\frac{n+u}{2}+1$ and $u+\bar s+1=u+\frac{n-u}{2}-1+1=\frac{n+u}{2}$, therefore Step $\bar s$ gives us that $\es{\frac{n+u}{2}+1}{\frac{n+u}{2}}$ and $\es{\frac{n+u}{2}+1}{\frac{n+u}{2}+1}\in I$;
        %\item if $n-u\in 2\N+1$, then $\bar s=\frac{n-u-1}{2}$. In this case, $n-\bar s=n-\frac{n-u-1}{2}=\frac{n+u+1}{2}$ and $u+\bar s+1=u+\frac{n-u-1}{2}+1=\frac{n+u+1}{2}=n-\bar s$, therefore Step $\bar s$ gives us that $\es{\frac{n+u+1}{2}}{\frac{n+u+1}{2}}\in I$.
    %\end{itemize}    
    we get that $\es{\fl*{\frac{n+u}{2}}+1}{\fl*{\frac{n+u}{2}}+1}\in (\es{n}{u+1},\dots,\es{n}{n})+Q_n\subseteq I$. Since $(\es{n}{u+1},\dots,\es{n}{n})+Q_n$ is symmetric, any monomial obtained from $\es{\fl*{\frac{n+u}{2}}+1}{\fl*{\frac{n+u}{2}}+1}=x_1\cdots x_{\fl*{\frac{n+u}{2}}+1}$ by permuting the variables still belongs to it, and thus to $I$. It follows that all squarefree monomials of degree $\fl*{\frac{n+u}{2}}+1$ belong to $I$. Since $Q_n\subseteq I$ contains all non-squarefree monomials, every monomial of degree $\geq\fl*{\frac{n+u}{2}}+1$ belongs to $I$, from which the thesis follows.
    
\end{proof}

We are now ready to compute the degree of regularity of $I_n$.

\begin{prop}\label{prop:dregIn}
    For $n\geq1$, let $I_n=(\es{n}{1},\dots,\es{n}{n})+(x_1^2,\dots,x_n^2)\subseteq R_n$ and $A_n=R_n/I_n$. The dimension $a_{n,j}$ of $A_{n,j}$ as an $\F_2$-vector space is
    \begin{equation*}
        a_{n,j}=\begin{cases}
            \binom{n}{j}-\binom{n}{j-1} & \text{if }0\leq j\leq\lfloor\frac{n}{2}\rfloor,\\
            0 & \text{if }j\geq\lfloor\frac{n}{2}\rfloor+1,
        \end{cases}
    \end{equation*}
    where $\binom{n}{-1}=0$. Since $\binom{n}{j}-\binom{n}{j-1}\neq 0$ for all $j\leq\left\lfloor\frac{n}{2}\right\rfloor$, this implies that
    \begin{equation*}
        d_{\operatorname{reg}}(I_n)=\fl*{\frac{n}{2}}+1.
    \end{equation*}
\end{prop}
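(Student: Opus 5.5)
The vanishing part is already available. Taking $u=0$ in Proposition~\ref{propAlgoESP} (with $I=I_n$, which contains $Q_n$ and all $\es{n}{j}$) gives $I_{n,j}=R_{n,j}$, i.e.\ $a_{n,j}=0$, for $j\geq\fl*{\frac n2}+1$. The content is therefore the formula $a_{n,j}=\binom nj-\binom n{j-1}$ for $j\leq\fl*{\frac n2}$. Once this holds, $\binom nj>\binom n{j-1}$ in that range gives $a_{n,\fl{n/2}}\neq 0$, and hence $d_{\operatorname{reg}}(I_n)=\fl*{\frac n2}+1$.

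I will translate the problem into linear algebra on squarefree monomials. Modulo $Q_n$, the space $R_{n,j}$ has basis $\{x_\gamma : |\gamma|=j\}$, of dimension $\binom nj$. The degree-$j$ part of $I_n/Q_n$ is spanned by the products $x_\alpha\es{n}{i}$ with $i\geq1$ and $|\alpha|=j-i$. Modulo $Q_n$ such a product equals $\sum_{\gamma\supseteq\alpha,\ |\gamma|=j}x_\gamma$, with every coefficient equal to $1$ because the decomposition $\gamma=\alpha\sqcup\beta$ is forced. So $I_{n,j}/Q_{n,j}$ is the $\F_2$-row space of the stacked inclusion matrices $W_{t,j}$, $0\leq t\leq j-1$, where $W_{t,j}$ has rows indexed by $t$-subsets, columns indexed by $j$-subsets, and entry $1$ exactly when the row set is contained in the column set. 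The claim becomes
\begin{equation*}
\operatorname{rk}_{\F_2}\begin{pmatrix}W_{0,j}\\ \vdots\\ W_{j-1,j}\end{pmatrix}=\binom n{j-1}\quad\text{for } j\leq\fl*{\frac n2}.
\end{equation*}

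For the upper bound on the rank, I will use an integral argument. Over $\mathbb{Z}$ we have $W_{t,j-1}W_{j-1,j}=(j-t)\,W_{t,j}$, so over $\mathbb{Q}$ every row of $W_{t,j}$ lies in the row space of $W_{j-1,j}$. Hence the $\mathbb{Z}$-lattice $L$ spanned by all rows has rank at most $\binom n{j-1}$. The $\F_2$-span of the reduced rows is the image of $L\otimes\F_2$, so its dimension is at most $\operatorname{rank}(L)\leq\binom n{j-1}$. This yields $a_{n,j}\geq\binom nj-\binom n{j-1}$.

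The main obstacle is the reverse inequality: the stacked matrix must have full rank $\binom n{j-1}$ over $\F_2$, not only over $\mathbb{Q}$. The matrix $W_{j-1,j}$ alone is insufficient here, since Wilson's rank formula shows it is rank-deficient mod $2$, so the lower-$t$ rows genuinely matter. My first attempt would be induction on $n$ using \eqref{eq:enj_recursion}. Split the $j$-subsets according to whether they contain $n$, and write the stacked matrix in block form in terms of the analogous matrices for $[n-1]$ in degrees $j$ and $j-1$. Then apply Pascal's identity $\binom n{j-1}=\binom{n-1}{j-1}+\binom{n-1}{j-2}$ together with the induction hypothesis (valid because $j-1\leq\fl*{\frac{n-1}2}$ and, where needed, $j\leq\fl*{\frac{n-1}2}$; the boundary case $j=\frac n2$ with $n$ even needs separate treatment via a block-triangular argument). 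As a fallback, I would use Wilson's theorem on $p$-ranks of inclusion matrices directly, or a Frankl-type explicit triangular minor. Such a minor would be indexed by $(j-1)$-subsets and "lattice-path/ballot" $j$-subsets, with one row taken from the appropriate $W_{t,j}$ for each $(j-1)$-subset, ordered so that the minor has unit diagonal over $\mathbb{Z}$.
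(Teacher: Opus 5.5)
Your reduction is the paper's. The vanishing for $j\geq\fl*{\frac n2}+1$ comes from Proposition~\ref{propAlgoESP} with $u=0$, and what remains is the $\F_2$-rank of the stacked inclusion matrix (the paper's $M_{n,j}$). The difference is that the paper simply cites Wilson's Lemma~1 for $\operatorname{rk}(M_{n,j})=\binom n{j-1}$, whereas you try to prove it.

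Your upper bound is correct: $W_{t,j-1}W_{j-1,j}=(j-t)W_{t,j}$, and the $\F_2$-rank is at most the $\mathbb{Q}$-rank. It also gives more than you use. It yields $a_{n,j}\geq\binom nj-\binom n{j-1}>0$ for all $j\leq\fl*{\frac n2}$. Together with the vanishing, this already proves $d_{\operatorname{reg}}(I_n)=\fl*{\frac n2}+1$ without any citation. The reverse inequality is needed only for the exact value of $a_{n,j}$.

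As written, that reverse inequality is only a plan, and you leave open exactly the case where the induction hypothesis is unavailable. The plan does close, though. Let $M^{(m)}_j$ be the stack of the $W_{t,j}$, $0\leq t\leq j-1$, over $[m]$. Splitting rows and columns according to whether they contain $n$ gives
\[
M^{(n)}_j=\begin{pmatrix} M^{(n-1)}_j & \ast\\ 0 & M^{(n-1)}_{j-1}\end{pmatrix},
\qquad\text{so}\qquad
\operatorname{rk}M^{(n)}_j\geq\operatorname{rk}M^{(n-1)}_j+\operatorname{rk}M^{(n-1)}_{j-1}.
\]
If $2j\leq n-1$, induction and Pascal's rule finish the argument.

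If $n=2j$, then $n-1$ is odd and $j=\fl*{\frac{n-1}2}+1$. Proposition~\ref{propAlgoESP} in $n-1$ variables then gives $I_{n-1,j}=R_{n-1,j}$. That is, $M^{(n-1)}_j$ has full column rank $\binom{n-1}j=\binom{n-1}{j-1}$, and the same inequality yields $\binom{n-1}{j-1}+\binom{n-1}{j-2}=\binom n{j-1}$. With the base case $j=1$ (a single all-ones row), this is a self-contained proof of the paper's cited rank statement, and it works over any field.

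Your fallback to Wilson's $p$-rank formula for a single $W_{t,k}$ would not suffice on its own. For $j\geq 2$, no single $W_{t,j}$ with $t\leq j-1$ reaches $\F_2$-rank $\binom n{j-1}$, as you noted yourself for $t=j-1$. So that formula does not determine the rank of the stack. What the paper invokes is a statement about the whole matrix $M_{n,j}$.
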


\begin{proof}
For $j\geq 0$ one has
$$A_{n,j}=\frac{R_{n,j}}{I_{n,j}}\cong\frac{R_{n,j}/Q_{n,j}}{I_{n,j}/Q_{n,j}}$$ hence $$a_{n,j}=\operatorname{dim}_{\F_2}\frac{R_{n,j}}{Q_{n,j}} - \operatorname{dim}_{\F_2}\frac{I_{n,j}}{Q_{n,j}},$$
where $Q_{n,j}=Q_n\cap R_{n,j}$. Since $R_{n,j}/Q_{n,j}$ is generated by all squarefree monomials of degree $j$, 
    $$\operatorname{dim}_{\F_2}\frac{R_{n,j}}{Q_{n,j}}=\binom{n}{j}.$$
The quotient $I_{n,j}/Q_{n,j}$ is generated by all products of the form $x_\alpha\cdot\es{n}{m}$, where $1\leq m\leq j$ and $x_\alpha$ is a squarefree monomial of degree $j-m$. However, in general, these generators are not linearly independent. In order to compute the dimension of $I_{n,j}/Q_{n,j}$, we consider an \textit{incidence matrix} $M_{n,j}$ where:
\begin{itemize}
    \item the rows are indexed by the generators $x_\alpha\cdot\es{n}{m}$ described before,
    \item the columns are indexed by the squarefree monomials  of degree $j$ in $n$ variables,
    \item the entry in row $x_\alpha\cdot\es{n}{m}$ and column $x_\beta$ is the coefficient of the monomial $x_\beta$ in the polynomial $x_\alpha\cdot\es{n}{m}$.
\end{itemize}
By construction, the rank of $M_{n,j}$ is the dimension of $I_{n,j}/Q_{n,j}$ as an $\F_2$-vector space. Since $\operatorname{deg}x_\alpha +m=j$ and modulo $Q_n$, we have 
\begin{equation*}
    x_\alpha\cdot\es{n}{m}(x)
    =\sum_{\gamma\subseteq[n],\,|\gamma|=m}\hspace{-0.3cm}x_\alpha x_\gamma
    =\sum_{\substack{\gamma\subseteq[n],\,|\gamma|=m, \\ \gamma\,\cap\,\alpha=\varnothing}} \hspace{-0.3cm}x_\alpha x_\gamma
    =\sum_{\substack{\beta\subseteq[n],\,|\beta|=j, \\ \alpha\,\subseteq\,\beta}}\hspace{-0.3cm}x_\beta.
\end{equation*}
In other words, the coefficient of $x_\beta$ in $x_\alpha\cdot\es{n}{m}$ is $1$ if and only if $\alpha\subseteq\beta$. Therefore, the matrix $M_{n,j}$ is the incidence matrix relative to the inclusion of the subsets $\alpha\subseteq[n]$ of cardinality $0\leq|\alpha|\leq j-1$ in the subsets $\beta\subseteq[n]$ of cardinality $j$. For example, if $n=4$ and $j=3$, we get the incidence matrix
\begin{equation*} 
    \begin{array}{c || c | c c c || c}
        M_{4,3}\text{ via }\cdot & x_1x_2x_3 & x_1x_2x_4 & x_1x_3x_4 & x_2x_3x_4 & \\
        \hline
        \hline
        1\cdot\es{4}{3} & 1 & 1 & 1 & 1 & \varnothing \\
        x_1\cdot\es{4}{2} & 1 & 1 & 1 & 0 & \{1\} \\
        x_2\cdot\es{4}{2} & 1 & 1 & 0 & 1 & \{2\} \\
        x_3\cdot\es{4}{2} & 1 & 0 & 1 & 1 & \{3\} \\
        \hline
        x_1x_2\cdot\es{4}{1} & 1 & 1 & 0 & 0 & \{1,2\} \\
        x_1x_3\cdot\es{4}{1} & 1 & 0 & 1 & 0 & \{1,3\} \\
        x_2x_3\cdot\es{4}{1} & 1 & 0 & 0 & 1 & \{2,3\} \\
        \hline
        x_4\cdot\es{4}{2} & 0 & 1 & 1 & 1 & \{4\} \\
        x_1x_4\cdot\es{4}{1} & 0 & 1 & 1 & 0 & \{1,4\} \\
        x_2x_4\cdot\es{4}{1} & 0 & 1 & 0 & 1 & \{2,4\} \\
        x_3x_4\cdot\es{4}{1} & 0 & 0 & 1 & 1 & \{3,4\} \\
        \hline
        \hline
         & \{1,2,3\} & \{1,2,4\} & \{1,3,4\} & \{2,3,4\} &M_{4,3}\text{ via }\subseteq \\
    \end{array}
\end{equation*}
\begin{comment}
which is just
\begin{equation*} 
    \begin{array}{c | c | c c c}
        M_{4,3}\text{ with }\subseteq & \{1,2,3\} & \{1,2,4\} & \{1,3,4\} & \{2,3,4\} \\
        \hline
        \varnothing & 1 & 1 & 1 & 1 \\
        \{1\} & 1 & 1 & 1 & 0 \\
        \{2\} & 1 & 1 & 0 & 1 \\
        \{3\} & 1 & 0 & 1 & 1 \\
        \hline
        \{1,2\} & 1 & 1 & 0 & 0 \\
        \{1,3\} & 1 & 0 & 1 & 0 \\
        \{2,3\} & 1 & 0 & 0 & 1 \\
        \hline
        \{4\} & 0 & 1 & 1 & 1 \\
        \{1,4\} & 0 & 1 & 1 & 0 \\
        \{2,4\} & 0 & 1 & 1 & 0 \\
        \{3,4\} & 0 & 0 & 1 & 1 \\
    \end{array}
\end{equation*}
\end{comment}

%If we first isolate the monomials [resp. sets] containing $x_n$ [resp. $n$] and then order the rows and columns in the proper way, we can notice, as in the example, that the matrix $M_{n,j}$ has a nice structure, with some repeating submatrices, and also a null and an identity submatrix. This structure was exploited by 
Wilson~\cite[Lemma 1]{WILSON1990609} proved that \begin{equation*}
    \operatorname{rk}(M_{n,j})=\binom{n}{j-1}\qquad\text{for every }j\leq\fl*{\frac{n}{2}}.
\end{equation*} 
This implies that $a_{n,j}=\binom{n}{j}-\binom{n}{j-1}$ for every $j\leq\fl*{\frac{n}{2}}$, as desired.\\

To conclude the proof of Proposition~\ref{prop:dregIn}, it remains to prove that $a_{n,j}=0$ for every $j\geq\fl*{\frac{n}{2}}+1$, that is, to prove that $I_{n,j}=R_{n,j}$ for every $j\geq\fl*{\frac{n}{2}}+1$.

Finally, the ideal $I_n$ satisfies the hypothesis of Proposition~\ref{propAlgoESP} with $\F=\F_2$ and $u=0$, and therefore $I_{n,j}=R_{n,j}$ for every $j\geq\fl*{\frac{n}{2}}+1$, which concludes the proof.
\end{proof}

The previous results allow us to give a first estimate of $d_{\operatorname{sol}}(S)$, where, we recall, the solutions of $S=0$ correspond to the solutions of the related instance of the ESDP.
\begin{theo}\label{cordsolS}
    For $n\geq 2$, the solving degree of $S$ is bounded from above by
    \begin{equation}
        d_{\operatorname{sol}}(S)\leq\begin{cases}
            2^r+1 & \text{if }n=2^{r+1}-2\text{ or }n=2^{r+1}-1,\\
            2^r & \text{if }r\geq 2\text{ and }2^r\leq n\leq 2^{r+1}-3,\\
        \end{cases}
    \end{equation}
    where $r=\fl*{\operatorname{log}_2 n}\geq 1$.
\end{theo}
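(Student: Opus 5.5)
The plan is to combine Theorem~\ref{thm:salizz} with Proposition~\ref{prop:dregIn}. Then the stated bound reduces to comparing two numbers, $\fl*{\frac{n}{2}}+2$ and $2^r$.

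\textbf{Step 1: bound $d_{\operatorname{reg}}(S)$.} The ideal $(S^{\operatorname{top}})$ contains $(S_2^{\operatorname{top}}\cup S_3^{\operatorname{top}})=I_n$. By \eqref{eq:BuildingBlocksESP}, $I_n=(\es{n}{1},\dots,\es{n}{n})+Q_n$. If $I_n\cap R_{n,d}=R_{n,d}$, then also $(S^{\operatorname{top}})\cap R_{n,d}=R_{n,d}$. Adding the linear forms of $S_1^{\operatorname{top}}$ can only make this degree smaller. Hence, by Proposition~\ref{prop:dregIn},
\begin{equation*}
    d_{\operatorname{reg}}(S)\leq d_{\operatorname{reg}}(I_n)=\fl*{\frac{n}{2}}+1.
\end{equation*}
This holds independently of $\mathbf{H}$, $s$ and $t$.

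\textbf{Step 2: apply Theorem~\ref{thm:salizz}.} The generators have the following degrees:
\begin{itemize}
    \item elements of $S_1$ have degree at most $1$;
    \item elements of $S_2$ have degree $2$;
    \item elements of $S_3$ have degree $2^i$ for $i\leq r$.
\end{itemize}
So the maximum generator degree is $2^r$, and $2^r\geq 2$ since $n\geq 2$ forces $r\geq 1$. Theorem~\ref{thm:salizz} therefore gives
\begin{equation*}
    d_{\operatorname{sol}}(S)\leq\max\left\{\fl*{\frac{n}{2}}+2,\ 2^r\right\}.
\end{equation*}

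\textbf{Step 3: compare the two terms.} Since $2^r\leq n\leq 2^{r+1}-1$, we have $2^{r-1}\leq\fl*{\frac{n}{2}}\leq 2^r-1$. The inequality $\fl*{\frac{n}{2}}+2\leq 2^r$ holds exactly when $\fl*{\frac{n}{2}}\leq 2^r-2$, that is, when $n\leq 2^{r+1}-3$.
\begin{itemize}
    \item If $2^r\leq n\leq 2^{r+1}-3$, the maximum is $2^r$. This range is nonempty only when $2^r\leq 2^{r+1}-3$, i.e.\ $r\geq 2$, which explains that hypothesis.
    \item If $n=2^{r+1}-2$ or $n=2^{r+1}-1$, then $\fl*{\frac{n}{2}}=2^r-1$, so the maximum is $2^r+1$.
\end{itemize}
For $r=1$ (so $n=2$ or $n=3$), $n$ is always of the form $2^{r+1}-2$ or $2^{r+1}-1$, so these values fall into the first case of the statement.

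\textbf{Where the difficulty lies.} Steps 2 and 3 are routine bookkeeping. The substantive content is the bound in Step 1, and it has already been established in Proposition~\ref{prop:dregIn}. The one point to state carefully is that the degree of regularity is monotone under enlarging the ideal generated by the top-degree parts. I will make this explicit as the inclusion $I_n\subseteq(S^{\operatorname{top}})$.
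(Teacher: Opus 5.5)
Your proposal is correct and follows essentially the same argument as the paper. You bound $d_{\operatorname{reg}}(S)$ by $d_{\operatorname{reg}}(I_n)=\lfloor n/2\rfloor+1$ via $I_n\subseteq(S^{\operatorname{top}})$, apply Theorem~\ref{thm:salizz} with maximal generator degree $2^r$, and compare $\lfloor n/2\rfloor+2$ with $2^r$ in the two ranges of $n$, spelling out the monotonicity of $d_{\operatorname{reg}}$ and the origin of the condition $r\geq 2$ a bit more explicitly than the paper does.
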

\begin{proof}
    By Theorem~\ref{thm:salizz}, we have that $d_{\operatorname{sol}}(S)\leq\max\{d_{\operatorname{reg}}(S)+1,\underset{f\in S}{\max}\operatorname{deg}(f)\}$, where
    \begin{equation*}
        d_{\operatorname{reg}}(S)+1\leq d_{\operatorname{reg}}(S_2\cup S_3)+1=d_{\operatorname{reg}}(I_n)+1=\fl*{\frac{n}{2}}+2
    \end{equation*}
    by Proposition~\ref{prop:dregIn} and
    \begin{equation*}
        \underset{f\in S}{\max}\operatorname{deg}(f)=\underset{f\in S_1\cup S_2\cup S_3}{\max}\operatorname{deg}(f)=\operatorname{max}\{1,2,2^0,\dots,2^r\}=2^r
    \end{equation*}
    since $r\geq 1$. We distinguish two cases:
    \begin{itemize}
        \item If $r\geq 2$ and $2^r\leq n\leq 2^{r+1}-3$, then 
        $$\fl*{\frac{n}{2}}+2\leq\fl*{2^r-\frac{3}{2}}+2=2^r-2+2=2^r.$$
        \item If $n\in\{2^{r+1}-2, 2^{r+1}-1\}$, then
        $$\fl*{\frac{n}{2}}+2=2^r-1+2=2^r+1> 2^r.$$
        Notice that this includes the case $r=1$, since in that case $n\in\{2,3\}$.
    \end{itemize}
    This completes the proof of the statement.
    
\end{proof}

Comparing this result with Theorem~\ref{teoSQM} we see that, for non-trivial values of $t$, the complexity of solving our polynomial model $S$ for the ESDP is smaller than the complexity of solving the model from~\cite{caminata2025quadraticmodelingssyndromedecoding}. \\

%\subsection{A refinement of the model}\label{sezFurther}
%We now recall that, until now, the parity check encoding has not been involved yet. This is because no matter which instance of the ESDP we consider, in most cases the maximum value in the bound given by Theorem~\ref{thm:salizz} is achieved by the degree of the highest degree polynomial of $S$, that is, $\es{n}{2^r}$.\\

Notice that the bound of Theorem~\ref{cordsolS} is obtained by considering only the system consisting of the field equations and the equations corresponding to the weight constraint. Since the dependence on the weight $t$ is eliminated when passing to the top-dimensional part $S^{\operatorname{top}}$ of the system, the bound depends only on $n$. In particular, it is independent of the underlying code and of the instance $(\mathbf{H},s,t)$ of the ESDP. This suggests that one may further improve our model using the parity check encoding. In the next section, we propose a variant of the model that depends on the instance of the ESDP.

\subsection{An improved model and its computational complexity}\label{secComplexityESDP}

To improve the model described in the previous section, notice that the bound given in Theorem~\ref{cordsolS} is often controlled by the degrees of the equations. 
Therefore, we propose to remove the highest-degree polynomial from the system $S=0$. 

We now analyze the resulting system. 
The next lemma is a variation of Lemma~\ref{lemESPweight} and is provided for completeness.
\begin{lem}\label{lemNSESDP}
    Given $n\geq1$, $r=\fl*{\operatorname{log}_2 n}$ and $0\leq t\leq n$ with binary expansion $t = \sum\limits_{\ell=0}^r t_\ell2^\ell$, let
    \[
        S_3'=\{\es{n}{2^i}(x)+t_i\mid i=0,\dots,r-1\}.
    \]
    The vectors of $\F_2^n$ that satisfy the polynomial system $S_3'=0$ are exactly
    \begin{equation*}
        \begin{cases}
            \text{the vectors of weight }t\text{ or }t+2^r&\text{if }0\leq t\leq n-2^r\\
            \text{the vectors of weight }t&\text{if }n-2^r+1\leq t\leq 2^r-1\\
            \text{the vectors of weight }t\text{ or }t-2^r&\text{if }2^r\leq t\leq n.
        \end{cases}
    \end{equation*}    
\end{lem}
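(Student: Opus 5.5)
By the proof of Lemma~\ref{lemESPweight} (the implication ``1)$\Rightarrow$2)'' together with Lucas' Theorem), a vector $y\in\F_2^n$ of weight $t'$ satisfies $\es{n}{2^i}(y)=\binom{t'}{2^i}=t'_i \pmod 2$ for every $i$. Hence $y$ solves $S_3'=0$ if and only if $t'_i=t_i$ for $i=0,\dots,r-1$. Equivalently, $t'\equiv t \pmod{2^r}$. So the whole statement reduces to a counting question: given $t\in\{0,\dots,n\}$, determine all $t'\in\{0,\dots,n\}$ with $t'\equiv t\pmod{2^r}$. The corresponding weight classes are exactly the solution set.

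To answer it, I use $r=\fl*{\operatorname{log}_2 n}$, that is, $2^r\leq n<2^{r+1}$. Any two distinct integers in $[0,n]$ differ by at most $n<2^{r+1}$. So if $t'\equiv t\pmod{2^r}$ and $t'\neq t$, then $|t'-t|=2^r$, and the candidates are only $t'=t+2^r$ and $t'=t-2^r$.

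Then I split into the three ranges of the statement.
\begin{itemize}
\item If $t\leq n-2^r$, then $t+2^r\leq n$ is admissible. Also $t-2^r<0$, because $n-2^r<2^r$, so that candidate is excluded.
\item If $n-2^r+1\leq t\leq 2^r-1$, then $t+2^r>n$ and $t-2^r<0$, so only $t$ remains.
\item If $t\geq 2^r$, then $t-2^r\geq0$ is admissible, while $t+2^r\geq 2^{r+1}>n$ is excluded.
\end{itemize}
These ranges partition $[0,n]$ because $n-2^r<2^r$. Note that when $n=2^{r+1}-1$ the middle range is empty, which is consistent.

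Conversely, every vector of weight $t'$ in the listed set satisfies $S_3'=0$ by the first paragraph, since $t'$ and $t$ share their lowest $r$ binary digits. There is no real obstacle here. The only point needing care is the bound $n<2^{r+1}$, which is what rules out differences of $2\cdot 2^r$ or more, together with checking the boundary cases of the three ranges.
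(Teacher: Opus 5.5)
Your proof is correct and is essentially the paper's argument. Both reduce to Lemma~\ref{lemESPweight} via Lucas' Theorem: the paper splits on the value of $\es{n}{2^r}(y)\in\{0,1\}$, so that $y$ solves the full system for target $t$ or $t\pm 2^r$. You phrase the same fact as $\operatorname{wt}(y)\equiv t \pmod{2^r}$ and use $n<2^{r+1}$ to list the admissible weights, which states both inclusions and the boundary checks a bit more explicitly.
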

\begin{proof}
    This easily follows from Lemma~\ref{lemESPweight}:
    \begin{itemize}
        \item If $0\leq t\leq n-2^r$, then $t_r=0$. Let $t'=t+2^r$, then $2^r\leq t'\leq n$. Therefore, a solution $y\in\F_2^n$ of $S_3'=0$, can have either $\operatorname{wt}(y)=t$, if $\es{n}{2^r}(y)=0$, or $\operatorname{wt}(y)=t'=t+2^r$, if $\es{n}{2^r}(y)=1$.
        \item If $n-2^r+1\leq t\leq 2^r-1$, we have $t_r=0$. Let $y$ be a solution of $S_3'=0$. Then $\es{n}{2^r}(y)$ can be either $0$ or $1$. If $\es{n}{2^r}(y)=0$, then by Lemma~\ref{lemESPweight}, its weight is exactly $t$. If $\es{n}{2^r}(y)=1$, $y$ has weight $t+2^r\geq n+1$, which is a contradiction. Therefore, all solutions of $S_3'=0$ must have weight $t$.
        \item If $2^r\leq t\leq n$, we can make similar considerations as in the first case with $t''=t-2^r$, which is such that $0\leq t''\leq n-2^r$.
    \end{itemize}
\end{proof}

%We could also ignore more high degree polynomials, such as $\es{n}{2^{r-1}}$, but we will have for each $t$ more possible weights among the solutions, and not even a set of parameters $t$ where the weight of the solutions is unique.\\

We can now consider the improved model $S'=S_1\cup S_2\cup S'_3$. Notice that the solutions of the instance $(\mathbf{H},s,t)$ of the ESDP are also solutions of the polynomial system $S'=0$. We recall that the ESDP has a unique solution if $t\leq\fl*{\frac{d(\mathbf{C})-1}{2}}\leq\fl*{\frac{n-1}{2}}\leq 2^r-1$. Notice that this falls in the two first cases of Lemma~\ref{lemNSESDP}.
Under the cryptographic assumption that the solution of an instance $(\mathbf{H},s,t)$ of the ESDP is unique, there are two possible scenarios:
\begin{itemize}
    \item The solution of $S'=0$ is unique, hence it is the solution of the ESDP. This occurs if the \textit{wrong} instance $(\mathbf{H},s,t\pm 2^r)$ happens to have no solution, or if the target weight $t$ is such that $n-2^r+1\leq t\leq 2^r-1$. For example, if $n$ is a power of $2$, such as in the Parameter set $9$ and $10$ of~\cite{ClassicMcEliece2022}, all the non-trivial choices for $t$ are in this interval.
    \item The system $S'=0$ has more than one solution, but it contains exactly one vector of weight $t$. In this situation, an attacker can efficiently recover the desired vector by computing the Hamming weight of the solutions of $S'=0$ and discarding the ones with the wrong weight. This final selection step is computationally negligible and does not contribute to the overall complexity of the model.
\end{itemize}

%\subsection{Computational complexity of the model}\label{secComplexityNSESDP}
In order to compute the complexity of solving the ESDP using this new polynomial model, we want to estimate the solving degree of the polynomial system $S'=0$. As we have done for the previous model, we start by bounding the degree of regularity $d_{\operatorname{reg}}(S')$. 
We start by studying the ideal 
$$J_n=(\es{n}{2^0},\dots,\es{n}{2^{r-1}})+Q_n,$$ 
which is generated by the polynomials in $S_2^{\operatorname{top}}$ and $(S_3')^{\operatorname{top}}$. As for $I_n$, by Lemma~\ref{lem:belprodottoesp}
\begin{equation*}
    J_n=(\es{n}{j}\mid j=1,\dots,2^r-1)+Q_n.
\end{equation*}
The next lemma will be useful in the sequel.
\begin{lem}\label{full prod}
    Let $k\leq n$.
    For every $j=1,\dots,n$, one has
    \begin{align*}
        \es{n}{j}(x_1,\dots,x_n)=\sum_{i=0}^j\es{k}{i}(x_1,\dots,x_k)\cdot\es{n-k}{j-i}(x_{k+1},\dots,x_n),
    \end{align*}
    where $\es{m}{0}=1$ and $\es{m}{l}=0$ for $l>m$.
    %, for both $m=k$ and $m=n-k$.
\end{lem}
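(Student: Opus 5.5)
The identity is a purely combinatorial statement about squarefree monomials, so I would prove it by splitting the index sets rather than by any algebraic manipulation. It holds over any ring, in particular over $\F_2$, and it does not need to be reduced modulo $Q_n$.

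Set $A=\{1,\dots,k\}$ and $B=\{k+1,\dots,n\}$, so $[n]=A\sqcup B$. By definition, $\es{n}{j}(x_1,\dots,x_n)=\sum_{\alpha\subseteq[n],\,|\alpha|=j}x_\alpha$. Each subset $\alpha\subseteq[n]$ with $|\alpha|=j$ decomposes uniquely as $\alpha=\alpha'\sqcup\alpha''$, where $\alpha'=\alpha\cap A$ and $\alpha''=\alpha\cap B$. Conversely, every pair $(\alpha',\alpha'')$ with $\alpha'\subseteq A$, $\alpha''\subseteq B$ and $|\alpha'|+|\alpha''|=j$ arises from exactly one such $\alpha$. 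In addition, $x_\alpha=x_{\alpha'}x_{\alpha''}$.

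I would then group the sum according to $i=|\alpha'|$, which ranges over $0\leq i\leq j$:
\begin{equation*}
\es{n}{j}=\sum_{i=0}^{j}\Bigl(\sum_{\alpha'\subseteq A,\,|\alpha'|=i}x_{\alpha'}\Bigr)\Bigl(\sum_{\alpha''\subseteq B,\,|\alpha''|=j-i}x_{\alpha''}\Bigr).
\end{equation*}
The first inner sum is $\es{k}{i}(x_1,\dots,x_k)$ and the second is $\es{n-k}{j-i}(x_{k+1},\dots,x_n)$. This is exactly the claimed formula.

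The only point that needs care is the boundary conventions. When $i=0$ or $j-i=0$, the corresponding empty-subset sum equals $1$, which matches the convention $\es{m}{0}=1$. When $i>k$ or $j-i>n-k$, there are no subsets of that size, so the sum is $0$, which matches $\es{m}{l}=0$ for $l>m$. The degenerate cases $k=0$ or $k=n$ are covered by the same conventions.

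An alternative route would be induction on $n-k$ using \eqref{eq:enj_recursion}, but the bijection argument is more direct. I expect no real obstacle here, only the bookkeeping of these edge cases.
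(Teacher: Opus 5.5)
Your proposal is correct and follows essentially the same route as the paper: both split each $j$-subset $\gamma\subseteq[n]$ according to $i=|\gamma\cap[k]|$, identify the grouped sums with $\es{k}{i}(x_1,\dots,x_k)\cdot\es{n-k}{j-i}(x_{k+1},\dots,x_n)$, and handle the terms with $i>k$ or $j-i>n-k$ through the stated conventions. The only difference is direction: the paper expands each product and then sums over $i$, while you start from $\es{n}{j}$ and regroup, which is immaterial.
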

\begin{proof}
    If $i>k$ or $j-i>n-k$, the product $\es{k}{i}(x_1,\dots,x_k)\cdot\es{n-k}{j-i}(x_{k+1},\dots,x_n)$ is equal to $0$.
    If $i,j$ are such that $i\leq k$ and $j-i\leq n-k$, then
    \begin{align*}
        \es{k}{i}(x_1,\dots,x_k)&\cdot\es{n-k}{j-i}(x_{k+1},\dots,x_n)=\\
        &=\left(\sum_{\alpha\subseteq[k],\,|\alpha|=i}\hspace{-0.2cm}x_\alpha\right)\left(\sum_{\beta\subseteq\{k+1,\dots,n\},\,|\beta|=j-i}\hspace{-0.3cm}x_\beta\right)=\\
        &=\sum_{\substack{\alpha\subseteq[k],\,|\alpha|=i,\\\beta\subseteq\{k+1,\dots,n\},\,|\beta|=j-i}}\hspace{-0.4cm}x_\alpha x_\beta
        =\sum_{\substack{\gamma\subseteq[n],\,|\gamma\,\cap\,[k]|=i,\\|\gamma\,\cap\,\{k+1,\dots,n\}|=j-i}}\hspace{-0.3cm}x_\gamma,
    \end{align*}
hence
    \begin{align*}
        &\sum_{i=0}^j\es{k}{i}(x_1,\dots,x_k)\cdot\es{n-k}{j-i}(x_{k+1},\dots,x_n)=\\
        =\sum_{i=0}^j&\sum_{\substack{\gamma\subseteq[n],\,|\gamma\,\cap\,[k]|=i,\\|\gamma\,\cap\,\{k+1,\dots,n\}|=j-i}}\hspace{-0.4cm}x_\gamma
        =\sum_{\gamma\subseteq[n],\,|\gamma|=j}\hspace{-0.2cm}x_\gamma
        =\es{n}{j}(x_1,\dots,x_n).
    \end{align*}
\end{proof}

The lemma allows us to express the ideal $J_n$ in different ways.
\begin{lem}\label{JnKnK'n}
For any $1\leq k\leq n-1$, the following ideals are equal:
    \begin{align*}
        J_n&=(\es{n}{2^i}(x)\mid i=0,\dots,r-1)+Q_n \\ 
        &=(\es{n}{j}(x)\mid j=1,\dots,2^r-1)+Q_n\\
        &=(\es{k}{2^i}(x_1,\dots,x_k)+\es{n-k}{2^i}(x_{k+1},\dots,x_n)\mid i=0,\dots,r-1)+Q_n\\
        &=(\es{k}{j}(x_1,\dots,x_k)+\es{n-k}{j}(x_{k+1},\dots,x_n)\mid j=1,\dots,2^r-1)+Q_n.
    \end{align*}
\end{lem}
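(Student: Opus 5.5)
Call the four ideals $J^{(1)}$ (the definition of $J_n$), $J^{(2)}$ (all $\es{n}{j}$, $1\le j\le 2^r-1$), $J^{(3)}$ (the split powers-of-two sums) and $J^{(4)}$ (the split sums for all $j\le 2^r-1$). The equality $J^{(1)}=J^{(2)}$ is already noted before the statement. It follows from \eqref{eq:BuildingBlocksESP}: for $j\le 2^r-1$ the binary expansion of $j$ uses only the digits $0,\dots,r-1$. I will then prove $J^{(2)}=J^{(4)}$ and $J^{(3)}=J^{(4)}$. Throughout I use the abbreviations $a_i=\es{k}{i}(x_1,\dots,x_k)$ and $b_i=\es{n-k}{i}(x_{k+1},\dots,x_n)$, with $a_0=b_0=1$ and $a_i=0$ for $i>k$, $b_i=0$ for $i>n-k$. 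All statements are modulo $Q_{k}$ and $Q_{n-k}$ in the respective variable sets, both contained in $Q_n$.

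\textbf{Step 1: $J^{(3)}=J^{(4)}$.} Apply Lemma~\ref{lem:belprodottoesp} separately to the rings in $x_1,\dots,x_k$ and in $x_{k+1},\dots,x_n$. This gives $a_j=\prod_{i:j_i=1}a_{2^i}$ and $b_j=\prod_{i:j_i=1}b_{2^i}$ modulo $Q_n$, where a product equals $0$ if $j$ exceeds the number of variables. For $j\le 2^r-1$ only the indices $i\le r-1$ occur. In $J^{(3)}$ we have $a_{2^i}\equiv -b_{2^i}=b_{2^i}$ (characteristic $2$) for each $i\le r-1$. Taking the product of these congruences over the set bits of $j$ gives $a_j\equiv b_j$ modulo $J^{(3)}$, so $J^{(4)}\subseteq J^{(3)}$. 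The reverse inclusion is trivial. I should double-check the degenerate cases: if $2^i>k$, then $a_{2^i}=0$ while $b_{2^i}$ may be nonzero, but the multiplicativity identity still holds on each side because the products vanish consistently.

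\textbf{Step 2: $J^{(2)}=J^{(4)}$.} By Lemma~\ref{full prod}, $\es{n}{j}=\sum_{i=0}^j a_i b_{j-i}$. I will show by induction on $j$ that $\es{n}{1},\dots,\es{n}{j}\in J^{(4)}$ and, conversely, that $a_1+b_1,\dots,a_j+b_j\in J^{(2)}$. The key computation works modulo the ideal generated by $a_l+b_l$ for $l<j$ together with $Q_n$. In this ideal each cross term $a_ib_{j-i}$ with $0<i<j$ becomes $a_ia_{j-i}$. By Lemma~\ref{lem:belprodottoesp} in the first $k$ variables, this equals $\binom{j}{i}a_j$ modulo $Q_n$. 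The sum of the cross terms is therefore $\bigl(\sum_{0<i<j}\binom{j}{i}\bigr)a_j=(2^j-2)a_j\equiv 0 \pmod 2$. Hence
\[
\es{n}{j}\equiv a_j+b_j
\]
modulo $(a_l+b_l\mid l<j)+Q_n$. With induction on $j$ this gives both inclusions. Since $J^{(2)}$ and $J^{(4)}$ both contain $Q_n$ and the index ranges agree, the two ideals coincide.

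The main obstacle is the replacement $b_{j-i}\to a_{j-i}$ inside products. It is valid because we work in the ideal containing $a_{j-i}+b_{j-i}$, which lets us write $a_ib_{j-i}=a_ia_{j-i}+a_i(b_{j-i}+a_{j-i})$. The second summand lies in the ideal. The cases where some index exceeds $k$ or $n-k$ need care: the relevant $a$ or $b$ vanishes, but Lemma~\ref{lem:belprodottoesp} with the convention $\es{m}{l}=0$ for $l>m$ still yields $a_ia_{j-i}=\binom{j}{i}a_j$. The overall parity argument therefore survives, and the conventions of Lemma~\ref{full prod} match those of Lemma~\ref{lem:belprodottoesp}.
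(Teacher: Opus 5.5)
Your proof is correct, but it is organized differently from the paper's.

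The paper proves the cycle of inclusions $J^{(3)}\subseteq J^{(4)}\subseteq J^{(2)}\subseteq J^{(3)}$, using $J^{(1)}=J^{(2)}$.

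For $J^{(4)}\subseteq J^{(2)}$ the paper runs essentially your induction. However, it splits into the cases $j\le\min\{k,n-k\}$ and $j>\min\{k,n-k\}$. In the first case it writes $\sum_{0<i<j}a_ia_{j-i}=\alpha\,a_j$ and treats $\alpha=0$ and $\alpha=1$ separately. Your parity count $\alpha\equiv\sum_{0<i<j}\binom{j}{i}=2^j-2\equiv 0$ shows that the $\alpha=1$ branch never occurs, and it handles all $j$ uniformly.

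The real divergence is the inclusion $J^{(1)}\subseteq J^{(3)}$. The paper proves it by induction on $i$: it expands $\sum_{u}\es{k}{u}\,\es{n}{2^i-u}$ and checks that every multiplicity $|T_{j'}|$ is odd. You bypass this entirely, in two moves:
\begin{enumerate}
\item The single congruence $\es{n}{j}\equiv a_j+b_j$ modulo $(a_l+b_l\mid l<j)+Q_n$ gives both inclusions between $J^{(2)}$ and $J^{(4)}$.
\item Multiplicativity, $a_j\equiv\prod_{i:j_i=1}a_{2^i}$ and $b_j\equiv\prod_{i:j_i=1}b_{2^i}$, applied separately in each block of variables, collapses $J^{(4)}$ onto $J^{(3)}$.
\end{enumerate}

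Your route is shorter and more uniform. The paper's route does keep working directly with the power-of-two generators in the hard inclusion. It gains nothing essential from this, and pays with the extra case analysis and the more delicate bit-counting argument.
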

\begin{proof}
    Throughout the proof, we denote $$K_n=(\es{k}{2^i}(x_1,\dots,x_k)+\es{n-k}{2^i}(x_{k+1},\dots,x_n)\mid i=0,\dots,r-1)+Q_n,$$ $$K'_n=(\es{k}{j}(x_1,\dots,x_k)+\es{n-k}{j}(x_{k+1},\dots,x_n)\mid j=1,\dots,2^r-1)+Q_n.$$ We also write $\es{k}{j}$ for $\es{k}{j}(x_1,\dots,x_k)$, and similarly for $\es{n}{j}$ and $\es{n-k}{j}$.

    The second equality follows from Lemma~\ref{lem:belprodottoesp}.
    It is clear that $K_n\subseteq K_n'$. Hence, to prove the chain of equalities in the statement, it suffices to prove the inclusions $K_n'\subseteq J_n$ and $J_n\subseteq K_n$.
    
    $K_n'\subseteq J_n$: We prove that $\es{k}{j}+\es{n-k}{j}\in J_n$ for all $j=1,\dots,2^r-1$ by induction on $j$. Since $\es{k}{j}+\es{n-k}{j}=0$ for $j>\max\{k,n-k\}$, we may restrict to $j\leq\max\{k,n-k\}$.
    The case $j=1$ is trivial, since $\es{k}{1}+\es{n-k}{1}=\es{n}{1}\in J_n$. For $j>1$, we consider two cases.
    \begin{itemize}
        \item If $j\leq\min\{k,n-k\}$, by Lemma~\ref{full prod} we have
        \begin{equation*}
            \es{n}{j}=\sum\limits_{i=0}^j\es{k}{i}\cdot\es{n-k}{j-i}=\es{k}{j}+\sum\limits_{i=1}^{j-1}\es{k}{i}\cdot\es{n-k}{j-i}+\es{n-k}{j}.
        \end{equation*}
        Since $\es{n}{j}\in J_n$, it suffices to show that $q_j=\sum\limits_{i=1}^{j-1}\es{k}{i}\cdot\es{n-k}{j-i}\in J_n$.

        Let $p_i=\es{k}{i}+\es{n-k}{i}$. One has
        \begin{equation*}
            q_j=\sum\limits_{i=1}^{j-1}\es{k}{i}(p_{j-i}+\es{k}{j-i})=\sum\limits_{i=1}^{j-1}\es{k}{i}\cdot p_{j-i}+\sum\limits_{i=1}^{j-1}\es{k}{i}\cdot\es{k}{j-i}.
        \end{equation*}
        The first sum, that we denote by $p'_j$, belongs to $J_n$, since $p_i\in J_n$ for all $i=1,\dots,j-1$ by the induction hypothesis. By Lemma~\ref{lem:belprodottoesp}, the second sum equals $\alpha\es{k}{j}$, for some $\alpha\in\mathbb{F}_2$.
        If $\alpha=0$, then $q_j=p_j'\in J_n$. If $\alpha=1$, then $q_j=p'_j+\es{k}{j}$, and therefore $\es{n}{j}=\es{k}{j}+q_j+\es{n-k}{j}=p'_j+\es{n-k}{j}$, which implies that $\es{n-k}{j}\in J_n$. In addition,
        \begin{equation*}
            q_j=\sum\limits_{i=1}^{j-1}(p_i+\es{n-k}{i})\es{n-k}{j-i}=\sum\limits_{i=1}^{j-1}p_i\cdot\es{n-k}{j-i}+\sum\limits_{i=1}^{j-1}\es{n-k}{i}\cdot\es{n-k}{j-i}.
        \end{equation*}
        Let $p''_j$ denote the first sum, then $p''_j\in J_n$ as before and the second sum is $\alpha\es{n-k}{j}$, for the same $t\in\F_2$ as above, since the bit check of Lemma~\ref{lem:belprodottoesp} involves the same pairs $(i,j-i)$. Since $\alpha=1$, we obtain $q_j=p''_j+\es{n-k}{j}\in J_n$. This concludes the proof. 

        \item If $j>\min\{k,n-k\}$, up to exchanging $k$ and $n-k$ we may assume without loss of generality that $k<j\leq n-k$. We want to prove that $\es{k}{j}+\es{n-k}{j}=\es{n-k}{j}\in J_n$. We proceed as before:
        \begin{equation*}
            \es{n}{j}=\sum\limits_{i=0}^j\es{k}{i}\cdot\es{n-k}{j-i}\overset{j>k}{=}\sum\limits_{i=1}^k\es{k}{i}\cdot\es{n-k}{j-i}+\es{n-k}{j}.
        \end{equation*}
        Again, letting $p_i=\es{k}{i}+\es{n-k}{i}$, we have
        \begin{equation*}
            \sum\limits_{i=1}^k\es{k}{i}\cdot\es{n-k}{j-i}=\sum\limits_{i=1}^k\es{k}{i}(p_{j-i}+\es{k}{j-i})=\sum\limits_{i=1}^k\es{k}{i}\cdot p_{j-i}+\sum\limits_{i=1}^k\es{k}{i}\cdot\es{k}{j-i},
        \end{equation*}
        where the first summand belongs to $J_n$, since by induction $p_i\in J_n$ if $i<j$, and the second summand is $0$ by Lemma~\ref{lem:belprodottoesp}. Therefore, $\es{n-k}{j}\in J_n$ as desired.
    \end{itemize}

    $J_n\subseteq K_n$: We want to prove that $\es{n}{2^i}\in K_n$ for $i=0,\dots,r-1$. We proceed by induction on $i$. The case $i=0$ is clear. Hence, suppose $i>0$ and assume that $\es{n}{2^j}\in K_n$ for $0\leq j\leq i-1$. By Lemma~\ref{lem:belprodottoesp}, this implies $\es{n}{j}\in K_n$ for $1\leq j\leq 2^i-1$. By Lemma~\ref{full prod}, moreover,
    \begin{equation*}
        \es{n}{2^i}=\sum\limits_{j=0}^{2^i}\es{k}{j}\cdot\es{n-k}{2^i-j}=\es{k}{2^i}+\sum\limits_{j=1}^{2^i-1}\es{k}{j}\cdot\es{n-k}{2^i-j}+\es{n-k}{2^i}.
    \end{equation*}
    So it suffices to prove that $\sum\limits_{j=1}^{2^i-1}\es{k}{j}\cdot\es{n-k}{2^i-j}\in K_n$.

    Let $1\leq u\leq\min\{k,2^i-1\}$, then by Lemma~\ref{full prod}
    \begin{align*}
        K_n\ni\es{k}{u}\cdot\es{n}{2^i-u}&=\es{k}{u}\sum\limits_{j=0}^{2^i-u}\es{k}{j}\cdot\es{n-k}{2^i-u-j}=\\
        %&=\es{k}{u}\cdot\es{k}{2^i-u}+\es{k}{u}\sum\limits_{j=0}^{2^i-u-1}\es{k}{j}\cdot\es{n-k}{2^i-u-j}\overset{j'=j+u}{=}\\
        &=\sum\limits_{j'=u}^{2^i-1}\es{k}{u}\cdot\es{k}{j'-u}\cdot\es{n-k}{2^i-j'}.
    \end{align*}
    The last equality follows from setting $j'=j+u$ and by observing that $\es{k}{u}\cdot\es{k}{2^i-u}=0$ for every $u$ by Lemma~\ref{lem:belprodottoesp}, since $u$ and $2^i-u$ must share at least one non-zero bit in their binary expansion. 
    %as their sum $2^i$ has all bits equal to zero, except for the last one 
    Therefore
    \begin{align*}
        K_n\ni\sum\limits_{u=1}^{\min\{k,2^i-1\}}\es{k}{u}\cdot\es{n}{2^i-u}
        &=\sum\limits_{u=1}^{\min\{k,2^i-1\}}\sum\limits_{j'=u}^{2^i-1}\es{k}{u}\cdot\es{k}{j'-u}\cdot\es{n-k}{2^i-j'}=\\
        &=\sum\limits_{j'=1}^{2^i-1}\sum\limits_{u=1}^{j'}\es{k}{u}\cdot\es{k}{j'-u}\cdot\es{n-k}{2^i-j'}=\\
        &=\sum\limits_{j'=1}^{2^i-1}|T_{j'}|\cdot\es{k}{j'}\cdot\es{n-k}{2^i-j'},
    \end{align*}
    where $T_{j'}$ is defined, following Lemma~\ref{lem:belprodottoesp}, as the set
    \begin{align*}
        T_{j'} &=\{u\in[1,j']\mid \text{there is no $\ell$ such that $u_\ell=1=(j'-u)_\ell$}\}=\\
        &=\{u\in[1,j']\mid \text{ if } j'_\ell =0 \text{ for some }\ell, \text{ then } t_\ell=0\}.
    \end{align*}
    Notice that if $j'>k$, then $\es{k}{j'}=0$ and $T_{j'}$ plays no role.   
    In addition, for every $j'$
    \begin{equation*}
        |T_{j'}|=2^{|\{\ell\in[0,\lceil\operatorname{log}_2(j')\rceil]\mid j'_\ell =1\}|}-1,
    \end{equation*}
    since, if $j'_\ell =1$, then $u_\ell$ can be either $0$ or $1$, however $0\not\in T_{j'}$. In particular, $|T_{j'}|$ is odd for all $j'\leq k$. It follows that $$\sum\limits_{j=1}^{2^i-1}\es{k}{j}\cdot\es{n-k}{2^i-j}=\sum\limits_{u=1}^{\min\{k,2^i-1\}}\es{k}{u}\cdot\es{n}{2^i-u}\in K_n,$$  
    which concludes the proof.
\begin{comment}
    if $u\in T_{j'}$, then the non-zero bits in the binary expansion of $j'-t$ and $t$ must correspond to each other as summarized in the following table:
    \begin{equation*}
        \begin{array}{ c | c c c c }
            &\multicolumn{4}{c}{\text{Bit occurrence}}\\
            \hline\hline
            j'-t & 0 & 0 & 1 & \cancel{1}\\
            t & 0 & 1 & 0 & \cancel{1}\\
            \hline
            j' & 0 & 1 & 1 & \cancel{0}
        \end{array}
    \end{equation*}
    Therefore, in the binary sum $(j'-t)+t=j'$, there is no remainder and the set $T_{j'}$ is just
\end{comment}

\end{proof}

Before continuing with the estimate of $d_{\operatorname{reg}}(S')$, notice that one could use the parity check encoding to eliminate some variables. In fact, since $\mathbf{H}\in\F_2^{(n-k)\times n}$ is a full rank matrix, it can be written in the form 
\begin{equation*}
    \mathbf{H}=(\mathbf{L}\mid \mathbf{I}_{n-k})=
    \left(\begin{array}{c : c}
        \cdots\ell_{1}\cdots & \\
        \vdots & \quad \mathbf{I}_{n-k}\quad \\
        \cdots\ell_{n-k}\cdots & \\
    \end{array}\right),
\end{equation*}%$(L\mid \mathbf{I}_{n-k})$
where $\mathbf{I}_{n-k}$ is the identity matrix of size $n-k$ and $\mathbf{L}$ is a matrix in $\F_2^{(n-k)\times k}$, by performing a Gaussian elimination on the rows and, if necessary, swapping columns. Notice that these operations do not change the underlying problem, since the matrix obtained is the parity check matrix of an equivalent code. We can therefore assume that $\mathbf{H}$ is of the form above. With this assumption, the equations corresponding to $S_1$ are of the form
\begin{align*}
        x_{k+1}=\ell_1(x_1,\dots,x_k)+s_1,\;\dots,\;x_n=\ell_{n-k}(x_1,\dots,x_k)+s_{n-k}.
\end{align*}
\begin{comment}

By substituting $x_{k+1},\dots,x_n$ into the other equations of $S'=0$, we get an equivalent polynomial system $T=0$, whose degree of regularity is bounded by $k+1$. 
In fact, $(T^{\operatorname{top}})\subseteq R_k$ contains $Q_k=(x_1^2,\ldots,x_k^2)$, hence $d_{\operatorname{reg}}(T)\leq d_{\operatorname{reg}}(Q_k)=k+1$.

Notice that even though $d_{\operatorname{reg}}(T)<d_{\operatorname{reg}}(S)$ whenever $k\leq\fl*{\frac{n}{2}}$, if we proceed as in Theorem~\ref{cordsolS} we may not obtain a smaller bound for $d_{\operatorname{sol}}(T)$. This is due to the fact that the bound is controlled by the degrees of the equations, which is the same for $S$ and $T$. 

\end{comment}
Hence, $S_1^{\operatorname{top}}$ consists of the linear polynomials $x_j+\ell_{j-k}(x_1,\dots,x_k)$ for $j=k+1,\dots,n$. Notice that
\begin{equation*}
    \frac{R_n}{J_n+(S_1^{\operatorname{top}})}=\frac{R_n}{J_n+(x_{k+1}+\ell_1(x_1,\dots,x_k),\dots,x_n +\ell_{n-k}(x_1,\dots,x_k))}\cong\frac{R_k}{J'_k},
\end{equation*}
where
\begin{equation*}
    J'_k=(\es{k}{2^i}(x_1,\dots,x_k)+\es{n-k}{2^i}(\ell_1,\dots,\ell_{n-k})\mid i=0,\dots,r-1)+Q_k
\end{equation*}
and $\ell_j=\ell_j(x_1,\dots,x_k)$. Therefore, since the ideal generated by $(S')^{\operatorname{top}}$ is precisely $J_n+(S_1^{\operatorname{top}})$, we have that $d_{\operatorname{reg}}(S')=d_{\operatorname{reg}}(J'_k)$, and for this reason we can now focus on the ideal $J'_k$. 

\begin{notat}
    For $j=1,\dots,n-k$, we write $\ell_j$ for $\ell_j(x_1,\dots,x_k)$. For every $\alpha=\{\alpha_1,\dots,\alpha_i\}\subseteq [n-k]$, $\ell_\alpha$ denotes the product $\ell_{\alpha_1}\cdots\ell_{\alpha_i}\in R_{k,i}$. Finally, for $g\in R_{n-k}$, we abbreviate $g(\ell_1,\dots,\ell_{n-k})$ as $g(\ell)$.
\end{notat}

The substitution reduces the degrees of the involved polynomials, since $\es{k}{2^i}(x)$ and $\es{n-k}{2^i}(\ell)$ are $0$ if $2^i$ is greater than $k$ and $n-k$, respectively.
Moreover, it highlights and isolates the dependence on the matrix $\mathbf{L}$, and so on the specific instance of the ESDP. The next lemma clarifies that dependence.

%The following lemma gives us a way to make use of this, recalling that the $k$ columns of the matrix $\mathbf{L}$ are associated with the variables $x_1,\dots,x_k$:
\begin{lem}
    Let $k\leq n$ and $1\leq i\leq n-k$. For any $\beta\subseteq[k]$ of cardinality $i$, the coefficient of the squarefree monomial $x_\beta$ in the polynomial \begin{equation*}
        \es{n-k}{i}(\ell)=\sum\limits_{\alpha\subseteq [n-k],\,|\alpha|=i}\hspace{-0.2cm}\ell_\alpha\in R_{k,i}
    \end{equation*}
    is the sum of all $i$-minors of $\mathbf{L}$ whose columns are indexed by $\beta$.
\end{lem}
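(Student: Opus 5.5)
The plan is to compute the coefficient of $x_\beta$ in each product $\ell_\alpha$ separately and then sum over $\alpha$. Write $\ell_a=\sum_{b=1}^k L_{a,b}x_b$ for $a\in[n-k]$, where $L_{a,b}$ are the entries of $\mathbf{L}$. Fix $\alpha=\{\alpha_1<\dots<\alpha_i\}\subseteq[n-k]$.

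\emph{Expanding one product.} Expanding $\ell_\alpha=\prod_{m=1}^i\ell_{\alpha_m}$ gives
\[
\ell_\alpha=\sum_{(b_1,\dots,b_i)\in[k]^i} L_{\alpha_1,b_1}\cdots L_{\alpha_i,b_i}\,x_{b_1}\cdots x_{b_i}.
\]
The monomial $x_{b_1}\cdots x_{b_i}$ equals the squarefree monomial $x_\beta$ exactly when $(b_1,\dots,b_i)$ is an ordering of $\beta$ with distinct entries. Tuples with repeated entries produce non-squarefree monomials, so they never contribute to $x_\beta$. Hence the coefficient of $x_\beta$ in $\ell_\alpha$ is
\[
\sum_{\sigma} \prod_{m=1}^i L_{\alpha_m,\beta_{\sigma(m)}},
\]
where $\beta=\{\beta_1<\dots<\beta_i\}$ and $\sigma$ ranges over the permutations of $[i]$.

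\emph{Identifying a minor.} This sum is the permanent of the $i\times i$ submatrix of $\mathbf{L}$ with rows $\alpha$ and columns $\beta$. Over $\F_2$ we have $\operatorname{sgn}(\sigma)=1$ for every $\sigma$, so the permanent equals the determinant, namely the $i$-minor of $\mathbf{L}$ with rows indexed by $\alpha$ and columns by $\beta$.

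\emph{Summing.} By linearity, the coefficient of $x_\beta$ in $\es{n-k}{i}(\ell)=\sum_{|\alpha|=i}\ell_\alpha$ is the sum, over all $i$-subsets $\alpha$ of the rows, of these minors. These are exactly all $i$-minors of $\mathbf{L}$ whose columns are indexed by $\beta$, which proves the claim. If $i>k$, no such $\beta$ exists and the statement is vacuous.

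\emph{Main subtlety.} The statement concerns the actual polynomial in $R_{k,i}$, not its class modulo $Q_k$. So the one point to check with care is that the non-squarefree terms really give no contribution to the coefficient of a squarefree monomial. They do not, because each such term is a different monomial. The remaining step is simply noting that the permanent equals the determinant in characteristic $2$.
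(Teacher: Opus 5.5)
Your proposal is correct and follows essentially the same route as the paper: expand each $\ell_\alpha$, read off the coefficient of $x_\beta$ as $\sum_{\sigma\in S_i}\prod_{m}\ell_{\alpha_m,\beta_{\sigma(m)}}$, identify it with $\det(\mathbf{L}[\alpha,\beta])$, and sum over $\alpha$. You are slightly more explicit than the paper in two places: you justify that tuples with repeated indices cannot contribute to $x_\beta$, and you state that the permanent equals the determinant because $\operatorname{sgn}(\sigma)=1$ in characteristic $2$. The paper uses that second fact silently when it writes the determinant without signs.
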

\begin{proof}
    Let $\mathbf{L}=(\ell_{i,j})_{i=1,\dots,n-k}^{j=1,\dots,k}$. For any $\alpha=\{\alpha_1,\dots,\alpha_i\}\subseteq[n-k]$ and $\beta=\{\beta_1,\ldots,\beta_i\}\subseteq [k]$, the determinant of the submatrix of $\mathbf{L}$ consisting of rows indexed by $\alpha$ and columns indexed by $\beta$ is 
    \begin{equation*}
        \operatorname{det}(\mathbf{L}[\alpha,\beta])=\sum_{\sigma\in S_i}\prod_{j=1}^i \ell_{\alpha_j,\beta_{\sigma(j)}}.
    \end{equation*}
    We want to prove that the coefficient of $x_\beta$ in $\es{n-k}{i}(\ell)$ is the sum of all $\operatorname{det}(\mathbf{L}[\alpha,\beta])$, with $\alpha\subseteq [n-k]$ and $|\alpha|=i$.

    Consider a summand $\ell_\alpha$ of $\es{n-k}{i}(\ell)$. We have
    \begin{align*}
        \ell_\alpha
        &=\ell_{\alpha_1}(x_1,\dots,x_k)\cdots\ell_{\alpha_i}(x_1,\dots,x_k)=\\
        &=(\ell_{\alpha_1,1}x_1+\cdots+\ell_{\alpha_1,k}x_k)\cdots(\ell_{\alpha_i,1}x_1+\cdots+\ell_{\alpha_i,k}x_k).
    \end{align*}
    The monomial $x_\beta$ is equal to the product $x_{\beta_{\sigma(1)}}\cdots x_{\beta_{\sigma(i)}}$ for every permutation $\sigma\in S_i$. For a given permutation $\sigma$, the coefficient of the product $x_{\beta_{\sigma(1)}}\cdots x_{\beta_{\sigma(i)}}$ is $\prod\limits_{j=1}^i\ell_{\alpha_j,\beta_{\sigma(j)}}$. Therefore, the coefficient of $x_\beta$ in $\ell_\alpha$ is $\sum\limits_{\sigma\in S_i}\prod\limits_{j=1}^i\ell_{\alpha_j,\beta_{\sigma(j)}}=\operatorname{det}(\mathbf{L}[\alpha,\beta])$. The thesis follows from observing that $\es{n-k}{i}(\ell)=\sum\limits_{\alpha\subseteq [n-k],\,|\alpha|=i}\hspace{-0.2cm}\ell_\alpha$.\\
\end{proof}

This lemma gives us a way to compute $\es{n-k}{2^i}(\ell)$, but more importantly, it gives us a way to quickly see if some of the $\es{n-k}{2^i}(\ell)$ can be discarded. In fact, if $2^i>\operatorname{rk}(\mathbf{L})$, then every minor of order $2^i$ of the matrix $\mathbf{L}$ vanishes, hence $\es{n-k}{2^i}(\ell)\in Q_k$. Therefore
\begin{align*}
    J'_k&=(\es{k}{2^i}(x)+\es{n-k}{2^i}(\ell)\mid i=0,\dots,r-1)+Q_k=\\
    =(\es{k}{2^i}(x)&+\es{n-k}{2^i}(\ell)\mid 1\leq 2^i\leq\operatorname{rk}(\mathbf{L}))+(\es{k}{2^i}(x)\mid \operatorname{rk}(\mathbf{L})<2^i\leq 2^{r-1})+Q_k.
\end{align*}

Since $\mathbf{L}\in\Fq^{(n-k)\times k}$, we have 
\begin{equation*}
    \operatorname{rk}(\mathbf{L})\leq\min\{k,n-k\}\leq\fl*{\frac{n}{2}}\quad\text{and}\quad{2^{r-1}\leq\fl*{\frac{n}{2}}}< 2^r,
\end{equation*}
so $\operatorname{rk}(\mathbf{L})\leq 2^r-1$. As we are not aware of any relation between $\operatorname{rk}(\mathbf{L})$ and $2^{r-1}$, we rewrite $J_k'$ as
\begin{align*}
    J'_k=(\es{k}{j}(x)+\es{n-k}{j}(\ell)\mid 1\leq j\leq\operatorname{rk}(\mathbf{L}))+(\es{k}{j}(x)\mid \operatorname{rk}(\mathbf{L})<j\leq 2^{r}-1)+Q_k.
\end{align*}
Equality holds, since $\es{k}{j}(x_1,\dots,x_k)+\es{n-k}{j}(x_{k+1},\dots,x_n)\in J_n$ for every $j=1,\dots,2^r-1$ by Lemma~\ref{JnKnK'n}, which implies that $\es{k}{j}(x)+\es{n-k}{j}(\ell)\in J_k'$.\\

We are now ready to estimate the degree of regularity of $J_k'$, hence the complexity of solving the polynomial system $S'=0$ to solve the ESDP. We proceed similarly to what we have done for $d_{\operatorname{reg}}(I_n)$, even if this time we will be able to give just a bound, and not to get the exact value of the degree of regularity. We restrict to the case $n\geq 4$, since the other cases are not cryptographically relevant.

\begin{theo}\label{thm:dregS'}
Suppose that $n\geq 4$. The degree of regularity of the ideal $J'_k$ is bounded by
    \begin{equation*}
        d_{\operatorname{reg}}(J_k')\leq\begin{cases}
            \fl*{\frac{k+\operatorname{rk}(\mathbf{L})}{2}}+2=2^r+1&\text{if }n=2^{r+1}-1,\,k\geq 2^r,\\
            &\text{and }\operatorname{rk}(\mathbf{L})=n-k\text{ (\,$\mathbf{L}$ has maximal rank)},\\
            \\
            \fl*{\frac{k+\operatorname{rk}(\mathbf{L})}{2}}+1&\text{otherwise}.
        \end{cases} 
    \end{equation*}
\end{theo}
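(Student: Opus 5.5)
The plan is to reduce to the cascade mechanism of Lemma~\ref{lemEspcascata} and Proposition~\ref{propAlgoESP}, applied in the ring $R_k$ with $\F=\F_2$. Set $\rho=\operatorname{rk}(\mathbf{L})$. From the description of $J'_k$ obtained just before the statement,
\[
J''_k:=(\es{k}{j}(x)\mid \rho<j\leq 2^r-1)+Q_k\subseteq J'_k .
\]
The ideal $J''_k$ is symmetric in $x_1,\dots,x_k$, since its generators are. Because $J''_k\subseteq J'_k$, it suffices to show that $J''_k$ (or $J'_k$ in the exceptional case) contains every squarefree monomial of the claimed degree. Monomials that are not squarefree already lie in $Q_k$.

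\textbf{Case $k\leq 2^r-1$.} Here $\es{k}{j}=0$ for $j>k$, so $J''_k\supseteq(\es{k}{\rho+1},\dots,\es{k}{k})+Q_k$. Proposition~\ref{propAlgoESP} with $n$ replaced by $k$ and $u=\rho$ then gives $d_{\operatorname{reg}}(J'_k)\leq\fl*{\frac{k+\rho}{2}}+1$ immediately. (The boundary case $\rho=k$ is trivial: then $\fl*{\frac{k+\rho}{2}}+1=k+1$, and every monomial of degree $k+1$ in $k$ variables lies in $Q_k$.)

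\textbf{Case $k\geq 2^r$.} Now $\rho\leq n-k\leq 2^{r+1}-1-k$, so $k+\rho\leq 2^{r+1}-1$. The available block of generators $\es{k}{\rho+1},\dots,\es{k}{2^r-1}$ does not reach the top degree $k$, so Proposition~\ref{propAlgoESP} cannot be applied directly. The plan has two stages.
\begin{enumerate}
\item \emph{Shrink the number of variables.} Apply Lemma~\ref{lemEspcascata} repeatedly, peeling off one variable at a time while keeping the upper end of the block fixed at $2^r-1$. After $s$ steps this yields $\es{k-s}{j}\in J''_k$ for $\rho+1+s\leq j\leq 2^r-1$; the lower end rises by one at each step.
\item \emph{Run the full cascade.} Stop at $s=k-2^r+1$, when the number of variables has dropped to $m=2^r-1$. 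Now $\es{m}{j}=0$ for $j>m$, so the block reaches the top and we have $\es{m}{j}\in J''_k$ for all $j>u'$, with $u'=\rho+k-2^r+1$. The staircase argument in the proof of Proposition~\ref{propAlgoESP} (which only uses Lemma~\ref{lemEspcascata} in the first $m$ variables) then produces the monomial $x_1\cdots x_d$ with $d=\fl*{\frac{m+u'}{2}}+1=\fl*{\frac{k+\rho}{2}}+1$. By symmetry of $J''_k$, every squarefree monomial of degree $d$ lies in $J''_k$.
\end{enumerate}
Stage 2 requires a nonempty block at $m=2^r-1$, that is $u'+1\leq 2^r-1$, which is equivalent to $k+\rho\leq 2^{r+1}-3$.

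\textbf{Boundary values $k+\rho\in\{2^{r+1}-2,\,2^{r+1}-1\}$.} I expect these to be the main obstacle.
\begin{itemize}
\item For $k+\rho=2^{r+1}-1$, we must have $n=2^{r+1}-1$, $k\geq 2^r$ and $\rho=n-k$, which is precisely the exceptional case of the statement. Here I would stop one step earlier, at $m=2^r$ variables, where the block is $[\rho+k-2^r+1,\,2^r-1]=\{2^r-1\}$. The top generator $\es{2^r}{2^r}=x_1\cdots x_{2^r}$ is then missing. Multiplying $\es{2^r}{2^r-1}$ by a single variable and reducing modulo $Q_k$ should nevertheless produce all squarefree monomials of degree $2^r+1$, giving the weaker bound $2^r+1$.
\item For $k+\rho=2^{r+1}-2$, the target degree is exactly $2^r$. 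I would try to show that $J''_k$ contains all squarefree monomials of degree $2^r$. The plan is to use the single surviving generator $\es{2^r}{2^r-1}$ in $2^r$ variables together with its images under permutations of $x_1,\dots,x_k$, multiplied by variables outside the chosen $2^r$. If this does not suffice, I would use the additional generators $\es{k}{j}(x)+\es{n-k}{j}(\ell)$ of $J'_k$ for $j\leq\rho$.
\end{itemize}
I expect this last sub-case to require the most care.
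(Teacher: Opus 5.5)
Your treatment of $k\le 2^r-1$, and of $k\ge 2^r$ with $k+\rho\le 2^{r+1}-3$, matches the paper's proof. You cascade down to $m=2^r-1$ variables, apply Proposition~\ref{propAlgoESP} with $u'=k+\rho-2^r+1$, and finish by symmetry. The gap is in the two boundary subcases. You leave $k+\rho=2^{r+1}-2$ open, and your plan for $k+\rho=2^{r+1}-1$ fails as written.

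In that exceptional subcase, the block at $m=2^r$ variables is $[\rho+k-2^r+1,\,2^r-1]=[2^r,\,2^r-1]=\varnothing$. So $\es{2^r}{2^r-1}$ is not available there. Even if it were, multiplying a polynomial of degree $2^r-1$ by one variable gives degree $2^r$, not $2^r+1$. The cascade actually stops at $\bar s=2^r-2-\rho=k-2^r-1$, that is, at $m=2^r+1$ variables, with surviving generator $\es{2^r+1}{2^r-1}$. This requires $k\ge 2^r+1$. When $k=2^r$ (so $\rho=2^r-1$), the range $\rho<j\le 2^r-1$ is empty and $J''_k=Q_k$, so no cascade is possible. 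That instance must be handled by the trivial bound $d_{\operatorname{reg}}(J'_k)\le d_{\operatorname{reg}}(Q_k)=k+1=2^r+1$, as the paper does.

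The missing idea for both boundary subcases is to multiply the surviving generator by variables \emph{inside} its own variable set. Modulo $Q_k$, for distinct $a,b\le m$, one has $x_a\,\es{m}{m-1}\equiv x_1\cdots x_m$ and $x_ax_b\,\es{m}{m-2}\equiv x_1\cdots x_m$, because every other term contains a square.

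For $k+\rho=2^{r+1}-2$, your cascade correctly ends at $m=2^r$ with $\es{2^r}{2^r-1}\in J''_k$. Hence $x_1\cdots x_{2^r}\in J''_k$. For $k+\rho=2^{r+1}-1$ with $k\ge 2^r+1$, the identity gives $x_1\cdots x_{2^r+1}\in J''_k$. Symmetry of $J''_k$ then yields the bounds $2^r=\fl*{\frac{k+\rho}{2}}+1$ and $2^r+1=\fl*{\frac{k+\rho}{2}}+2$ respectively; this is exactly the paper's argument.

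Your suggestion of multiplying by variables outside the chosen $2^r$ only produces sums of $2^r$ distinct degree-$2^r$ monomials. It is not clear how to isolate a single monomial from these. The extra generators $\es{k}{j}(x)+\es{n-k}{j}(\ell)$ are not needed at all.
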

\begin{proof}
We divide the proof in three cases, one of which is divided in three subcases.
\begin{itemize}
\item If $k\leq 2^r-1$, then $\es{k}{j}(x)=\es{k}{j}$ vanishes for every $j> k$, so $J'_k=(\es{k}{j}(x)+\es{n-k}{j}(\ell)\mid 1\leq j\leq\operatorname{rk}(\mathbf{L}))+(\es{k}{j}(x)\mid \operatorname{rk}(\mathbf{L})<j\leq k)+Q_k.$
If $\operatorname{rk}(\mathbf{L})<k$, by applying Proposition~\ref{propAlgoESP} with $u=\operatorname{rk}(\mathbf{L})$ to $J_k'$, we obtain
\begin{equation}\label{dregperj}
    d_{\operatorname{reg}}(J_k')\leq \fl*{\frac{k+\operatorname{rk}(\mathbf{L})}{2}}+1.
\end{equation}
If instead $\operatorname{rk}(\mathbf{L})=k$, (\ref{dregperj}) still holds, since $d_{\operatorname{reg}}(J_k')\leq d_{\operatorname{reg}}(Q_k)=k+1$.

\item If $2^r\leq k$ and $\operatorname{rk}(\mathbf{L})\leq2^{r}-2$, using Lemma~\ref{lemEspcascata} as in the proof of Proposition~\ref{propAlgoESP}, but starting from the last polynomials of the list, we obtain:
\begin{equation*}
    \begin{array}{c | c c c c c c c c}
         \text{Step 0} & \es{k}{\operatorname{rk}(\mathbf{L})+1} & \es{k}{\operatorname{rk}(\mathbf{L})+2} & \dots & \es{k}{2^r-2} & \es{k}{2^r-1} & \in J_k'\\
         \text{Step 1} & / & \es{k-1}{\operatorname{rk}(\mathbf{L})+2} & \dots & \es{k-1}{2^r-2} & \es{k-1}{2^r-1} & \in J_k'\\
         \vdots & / & / & \vdots & \vdots & \vdots & \in J_k'.\\
    \end{array}
\end{equation*}
If $\operatorname{rk}(\mathbf{L})+1+s\leq 2^r-1\leq k-s$, that is, for
\begin{equation*}
    s\leq\min\{2^r-2-\operatorname{rk}(\mathbf{L}),k-2^r+1\},
\end{equation*} in Step $s$ we have $\es{k-s}{\operatorname{rk}(\mathbf{L})+1+s},\dots,\es{k-s}{2^r-1}\in J_k'$. We analyze three different cases, depending on the value of $k+\operatorname{rk}(\mathbf{L})$. Notice that $k+\operatorname{rk}(\mathbf{L})\leq n\leq 2^{r+1}-1$.
\begin{itemize}
\item If $k+\operatorname{rk}(\mathbf{L})\leq 2^{r+1}-3$, then the last meaningful step happens for $\bar s=k-2^r+1$. In this case,
$\es{2^r-1}{k+\operatorname{rk}(\mathbf{L})-2^r+2},\dots,\es{2^r-1}{2^r-1}\in J_k'$. Applying Proposition~\ref{propAlgoESP} with $2^r-1$ variables and $u=k+\operatorname{rk}(\mathbf{L})-2^r+1$, we get that
$$d_{\operatorname{reg}}(J_k')\leq\fl*{\frac{k+\operatorname{rk}(\mathbf{L})}{2}}+1.
$$
\item If $k+\operatorname{rk}(\mathbf{L})=2^{r+1}-2$, then the last meaningful step happens for $\bar s=2^r-2-\operatorname{rk}(\mathbf{L})=k-2^r$.
In this case, $\es{2^r}{2^r-1}\in J_k'$, therefore $x_{2^r}\cdot\es{2^r}{2^r-1}=x_{2^r}\cdot\es{2^r-1}{2^r-1}+x_{2^r}^2\cdot\es{2^r-1}{2^r-2}\in J_k'$, hence $\es{2^r}{2^r}\in J'_k$. The same can be done analogously with the other variables, and since $\fl*{\frac{k+\operatorname{rk}(\mathbf{L})}{2}}+1=\fl*{\frac{2^{r+1}-2}{2}}+1=2^r$, bound \eqref{dregperj} holds also in this case, as in the proof of Proposition~\ref{propAlgoESP}.
\item If $k+\operatorname{rk}(\mathbf{L})=2^{r+1}-1$, then the last meaningful step happens for $\bar s=2^r-2-\operatorname{rk}(\mathbf{L})=k-2^r-1$.
In this case, $\es{2^r+1}{2^r-1}\in J_k'$, hence
\begin{align*}
    x_{2^r+1}x_{2^r}&\cdot\es{2^r+1}{2^r-1}
    =x_{2^r+1}x_{2^r}(\es{2^r}{2^r-1}+x_{2^r+1}\es{2^r}{2^r-2})=\\
    &=x_{2^r+1}x_{2^r}(\es{2^r-1}{2^r-1}+x_{2^r}\es{2^r-1}{2^r-2}+x_{2^r+1}\es{2^r}{2^r-2})=\\
    &=x_{2^r+1}x_{2^r}\es{2^r-1}{2^r-1}+x_{2^r+1}x_{2^r}^2\es{2^r-1}{2^r-2}+x_{2^r+1}^2x_{2^r}\es{2^r}{2^r-2}=\\
    &=\es{2^r+1}{2^r+1}+x_{2^r+1}x_{2^r}^2\es{2^r-1}{2^r-2}+x_{2^r+1}^2x_{2^r}\es{2^r}{2^r-2}
\end{align*}
belongs to $J_k'$, and therefore $\es{2^r+1}{2^r+1}\in J_k'$. We obtain the bound
\begin{equation}\label{dregperjpeggio}
    d_{\operatorname{reg}}(J_k')\leq \fl*{\frac{k+\operatorname{rk}(\mathbf{L})}{2}}+2=2^r+1.
\end{equation}
Notice that this case only occurs when $n=2^{r+1}-1$ and $\operatorname{rk}(\mathbf{L})=n-k\leq 2^r-2\leq k-2$, which implies $k\geq 2^r+1$. 
\end{itemize}
\item If $2^r\leq k$ and $\operatorname{rk}(\mathbf{L})\geq 2^{r}-1$, then $n\geq k+\operatorname{rk}(\mathbf{L})\geq 2^{r+1}-1$. This forces $n=2^{r+1}-1$, $k=2^r$ and $\operatorname{rk}(\mathbf{L})=n-k=2^r-1$. Then \eqref{dregperjpeggio} holds, since $d_{\operatorname{reg}}(J_k')\leq k+1=\fl*{\frac{k+\operatorname{rk}(\mathbf{L})}{2}}+2$.
\end{itemize}
\end{proof}

Since the polynomials in $S'$ have degree smaller than or equal to $2^{r-1}$, by combining Theorem~\ref{thm:dregS'} and Theorem~\ref{thm:salizz} we obtain
\begin{cor}\label{cordsolS'}
suppose that $n\geq 4$. The solving degree of the polynomial system $S'=0$ is bounded by
    \begin{equation*}
        d_{\operatorname{sol}}(S')\leq\begin{cases}
            2^r+2 & \text{if }n=2^{r+1}-1,\,k\geq 2^r,\\
            &\text{and }\operatorname{rk}(\mathbf{L})=n-k,\\
            \\
            \max\left\{\fl*{\frac{k+\operatorname{rk}(\mathbf{L})}{2}}+2,2^{r-1}\right\} & \text{otherwise}.\\
        \end{cases}
    \end{equation*}
\end{cor}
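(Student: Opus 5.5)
The plan is to apply Theorem~\ref{thm:salizz} to $S'$ and insert the bound of Theorem~\ref{thm:dregS'}. By Theorem~\ref{thm:salizz},
\[
d_{\operatorname{sol}}(S')\leq\max\Bigl\{d_{\operatorname{reg}}(S')+1,\ \max_{f\in S'}\operatorname{deg}(f)\Bigr\}.
\]
So two quantities have to be controlled. The first is the degree of regularity of $S'$. The second is the maximal degree of the polynomials in $S'=S_1\cup S_2\cup S_3'$.

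For the degree of regularity, recall the identification made just before Theorem~\ref{thm:dregS'}. The ideal generated by $(S')^{\operatorname{top}}$ is $J_n+(S_1^{\operatorname{top}})$, and the quotient $R_n/(J_n+(S_1^{\operatorname{top}}))$ is isomorphic, as a graded ring, to $R_k/J_k'$. The isomorphism is graded because the substitutions $x_j\mapsto \ell_{j-k}(x_1,\dots,x_k)$ are homogeneous linear. Hence $d_{\operatorname{reg}}(S')=d_{\operatorname{reg}}(J_k')$. Adding $1$ to the bounds of Theorem~\ref{thm:dregS'} gives
\begin{itemize}
\item $2^r+2$ in the exceptional case $n=2^{r+1}-1$, $k\geq 2^r$, $\operatorname{rk}(\mathbf{L})=n-k$;
\item $\fl*{\frac{k+\operatorname{rk}(\mathbf{L})}{2}}+2$ in all other cases.
\end{itemize}

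For the degrees of the equations, note that $S_1$ is linear and $S_2$ is quadratic. The elements of $S_3'$ are $\es{n}{2^i}+t_i$ with $i\leq r-1$, so they have degree at most $2^{r-1}$. Since $n\geq 4$ we have $r\geq 2$, so $2^{r-1}\geq 2$ and the maximal degree is exactly $2^{r-1}$.

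Taking the maximum finishes the non-exceptional case directly, giving $\max\bigl\{\fl*{\frac{k+\operatorname{rk}(\mathbf{L})}{2}}+2,\,2^{r-1}\bigr\}$. In the exceptional case we only need $2^r+2\geq 2^{r-1}$, which is obvious, so the bound is $2^r+2$. No step here is a real obstacle. The only point needing care is the identity $d_{\operatorname{reg}}(S')=d_{\operatorname{reg}}(J_k')$. I would justify it through the dimension of the graded pieces: $(S')^{\operatorname{top}}$ generates $R_{n,d}$ exactly when $J_k'$ generates $R_{k,d}$, because the graded isomorphism identifies the degree-$d$ pieces of the two quotients.
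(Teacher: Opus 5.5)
Your proposal is correct and follows the paper's argument: the paper obtains the corollary by combining Theorem~\ref{thm:salizz} with the bound of Theorem~\ref{thm:dregS'} (through $d_{\operatorname{reg}}(S')=d_{\operatorname{reg}}(J_k')$) and with the observation that every polynomial in $S'$ has degree at most $2^{r-1}$. Your remarks on the graded isomorphism, and on $r\geq 2$ making $2^{r-1}$ the maximal degree, make explicit what the paper uses implicitly.
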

While the model $S'$ solves a relaxed version of ESDP, the total complexity of solving the ESDP using the model $S'$ is dominated by the complexity of solving $S'=0$. Moreover, the solving degree of $S'$ is almost always smaller than the solving degree of $S$, which we estimated in Theorem~\ref{cordsolS}. Therefore, it is more efficient to solve the ESDP using the model $S'$. In contrast with the case of $S$, the bound on the solving degree of $S'$ depends on the dimension $k$ of the code and the rank of the matrix $\mathbf{L}$, and therefore on the specific instance of the ESDP. 

In the next section, we briefly discuss how to refine the bounds of Theorem~\ref{thm:dregS'} and Corollary~\ref{cordsolS'} by relating the rank of the matrix $\mathbf{L}$ to invariants of the code $\mathbf{C}$. 

\subsection{Bounds on the rank of the matrix $\mathbf{L}$}\label{sezL}
Given a parity check matrix $\mathbf{H}\in\F_2^{(n-k)\times n}$ of the code $\mathbf{C}$, we consider a matrix $\mathbf{L}\in\F_2^{(n-k)\times k}$ such that $(\mathbf{L}\mid \mathbf{I}_{n-k})$ is the parity check matrix of a code equivalent to $\mathbf{C}$. A matrix $\mathbf{L}$ exists, since $\mathbf{H}$ is full rank, however it is not unique. In fact, different choices of $\mathbf{L}$ may even have different ranks, as the next example shows. 
\begin{ex}
\begin{comment}
\begin{equation*}
    \mathbf{H}=\left(\begin{array}{c c : c c}
        1 & 0 & 1 & 0 \\
        1 & 1 & 0 & 1
    \end{array}\right)
    \overset{C_1\leftrightarrow C_4}{\longleftrightarrow}
    \left(\begin{array}{c c c c}
        0 & 0 & 1 & 1 \\
        1 & 1 & 0 & 1
    \end{array}\right)
    \overset{R_1\leftrightarrow R_1+R_2}{\longleftrightarrow}
    \left(\begin{array}{c c : c c}
        1 & 1 & 1 & 0 \\
        1 & 1 & 0 & 1
    \end{array}\right)=\mathbf{H'},
\end{equation*}
\end{comment}
Let $$\mathbf{H}=\left(\begin{array}{c c : c c}
        1 & 0 & 1 & 0 \\
        1 & 1 & 0 & 1
    \end{array}\right)\quad \mbox{ and }\quad
    \mathbf{H'}=\left(\begin{array}{c c : c c}
        1 & 1 & 1 & 0 \\
        1 & 1 & 0 & 1
    \end{array}\right).$$
It is easy to check that $\mathbf{H}$ and $\mathbf{H'}$ are parity check matrix of two equivalent codes. However, the corresponding $\mathbf{L}$ and $\mathbf{L'}$ have rank $2$ and $1$, respectively.
\end{ex}

By Theorem~\ref{thm:dregS'}, the matrix $\mathbf{L}$ of least rank corresponds to the system $S'$ with predicted smallest solving degree.
It is not always clear, however, how low $\operatorname{rk}(\mathbf{L})$ can be and how one can produce an $\operatorname{rk}(\mathbf{L})$ of small rank, starting from a given parity check matrix $\mathbf{H}$. The rest of the section is devoted to the study of these questions.

We start with a lemma that computes the rank of $\mathbf{L}$ for codes with large minimum distance.
\begin{lem}
    If the minimum distance of the code is $d(\mathbf{C})\geq k$, then any matrix $\mathbf{L}$ as above has full rank.
\end{lem}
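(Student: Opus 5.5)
The plan is to use the generator matrix attached to $\mathbf{H}=(\mathbf{L}\mid\mathbf{I}_{n-k})$. Since $\mathbf{H}$ is a parity check matrix of a code equivalent to $\mathbf{C}$, and equivalence preserves both the dimension and the minimum distance, it suffices to work with the code $\mathbf{C}'=\ker\mathbf{H}$. This code satisfies $d(\mathbf{C}')=d(\mathbf{C})\geq k$ and has generator matrix
\begin{equation*}
    \mathbf{G}=(\mathbf{I}_k\mid \mathbf{L}^\top)\in\F_2^{k\times n}.
\end{equation*}
One checks directly that $\mathbf{H}\mathbf{G}^\top=\mathbf{L}+\mathbf{L}=0$ over $\F_2$.

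\emph{Step 1 (translate rank into codewords).} Every nonzero $m\in\F_2^k$ gives a nonzero codeword $c=m\mathbf{G}=(m\mid m\mathbf{L}^\top)$, whose weight is
\begin{equation*}
    \operatorname{wt}(c)=\operatorname{wt}(m)+\operatorname{wt}(m\mathbf{L}^\top).
\end{equation*}
Full rank for $\mathbf{L}\in\F_2^{(n-k)\times k}$ means $\operatorname{rk}(\mathbf{L})=\min\{k,n-k\}$. I would split into two cases according to which of $k$ and $n-k$ is smaller.

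\emph{Step 2 (case $k\leq n-k$).} Here full rank means $\mathbf{L}$ has trivial right kernel, i.e.\ $m\mathbf{L}^\top\neq 0$ for all $m\neq 0$. Suppose instead $m\mathbf{L}^\top=0$ for some $m\neq0$. Then $c=(m\mid 0)$ is a nonzero codeword of weight $\operatorname{wt}(m)\leq k$. This is borderline against $d\geq k$: it only contradicts the hypothesis when $\operatorname{wt}(m)<k$. The remaining case is $m=(1,\dots,1)$, where the codeword has weight exactly $k$, which is allowed. So I must rule out that the all-ones vector of length $k$ is the only kernel vector. The idea is to combine this kernel vector with a row of $\mathbf{G}$. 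A row $g_i=(e_i\mid \mathbf{L}^\top_{i})$ has weight at least $k$, so its $\mathbf{L}$-part has weight at least $k-1$. Then $c+g_i=(\mathbf{1}+e_i\mid \mathbf{L}_i^\top)$ has weight $(k-1)+\operatorname{wt}(\mathbf{L}_i^\top)$, which does not immediately contradict the hypothesis. I expect this borderline to be the main obstacle. Possibly the intended hypothesis should be read as $d(\mathbf{C})>k$, or $k$ small. If so, I would note that the strict-inequality argument is immediate, and handle equality by checking small cases or flagging it.

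\emph{Step 3 (case $k>n-k$).} Full rank means the rows of $\mathbf{L}$ are independent, i.e.\ the columns of $\mathbf{L}^\top$ span $\F_2^{n-k}$. The natural argument passes to the dual: $\mathbf{H}$ generates $\mathbf{C}'^\perp$. A nonzero $v\in\F_2^{n-k}$ with $v\mathbf{L}=0$ yields the dual codeword $v\mathbf{H}=(0\mid v)$. That is a statement about the dual distance, not about $d(\mathbf{C})$, so a different route is needed. Instead I would use the Singleton bound $d\leq n-k+1$: together with $d\geq k$ it gives $k\leq n-k+1$. So this case forces $n=2k-1$ and $d=k$, making $\mathbf{C}'$ an MDS code. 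For MDS codes every $k$ columns of $\mathbf{G}$ are independent. Taking any $n-k=k-1$ columns of $\mathbf{I}_k$ together with the columns of $\mathbf{L}^\top$ and expanding along the identity part, independence gives that $\mathbf{L}$ has full rank. The same MDS argument also disposes of the equality case $d=k$ in Step 2 whenever $d=n-k+1$. Otherwise $d=k\leq n-k$, and I would try to bound the weight of $(\mathbf{1}\mid 0)$ plus a suitable row to reach a contradiction.
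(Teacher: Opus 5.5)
Your Step 2 argument is exactly the paper's proof: a nonzero $m$ with $\mathbf{L}m^\top=0$ gives the nonzero codeword $(m\mid 0)$ of weight at most $k$. But look at how the paper finishes: it writes $\operatorname{wt}(a)\leq k<d(\mathbf{C})$, i.e.\ it silently uses $d(\mathbf{C})>k$ rather than the stated $d(\mathbf{C})\geq k$. Your suspicion about the borderline is therefore well founded. The gap in your proposal is that you leave the case $d(\mathbf{C})=k\leq n-k$ open and plan to close it by extracting a contradiction from $(\mathbf{1}\mid 0)$ plus a suitable row of $\mathbf{G}$. No such contradiction exists, because the lemma is false in that case, and the paper's own example shows it. For $\mathbf{L}'=\begin{pmatrix}1&1\\1&1\end{pmatrix}$, the code $\ker(\mathbf{L}'\mid\mathbf{I}_2)=\{0000,1100,1011,0111\}$ has $n=4$, $k=2$ and $d=2=k$, yet $\operatorname{rk}(\mathbf{L}')=1<2=\min\{k,n-k\}$. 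The responsible codeword is $1100=(\mathbf{1}\mid 0)$, of weight exactly $k$ --- precisely the scenario you isolated.

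The right resolution is to correct the hypothesis to $d(\mathbf{C})>k$, not to look for a cleverer argument. Under that hypothesis the Singleton bound gives $k<d(\mathbf{C})\leq n-k+1$, so $k\leq n-k$ and ``full rank'' means $\operatorname{rk}(\mathbf{L})=k$. Your Step 2 argument alone is then a complete proof, identical to the paper's, and the case split and the MDS analysis of Step 3 become unnecessary. Some side remarks on Step 3, which only matters when $d=k$:
\begin{itemize}
\item To extract an invertible $(n-k)\times(n-k)$ submatrix of $\mathbf{L}^\top$ from the MDS property, take $2k-n=1$ column of $\mathbf{I}_k$ together with the $n-k$ columns of $\mathbf{L}^\top$, not $n-k$ columns of $\mathbf{I}_k$.
\item Binary MDS codes with $n=2k-1$ and $d=k$ exist only for $k\leq 2$.
\item Your remark that the MDS argument settles the equality case of Step 2 is vacuous, since $d=k\leq n-k$ forces $d<n-k+1$.
\end{itemize}
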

\begin{proof}
    Let $a_1,\dots,a_k\in\F_2$ be such that $a_1\ell^c_1+\dots +a_k\ell^c_k=0$, where $\ell^c_i$ denotes the $i$-th column of $\mathbf{L}$, $i=1,\dots,k$. Therefore, the vector $a=(a_1,\dots,a_k,0,\dots,0)\in\F_2^{n}$ is such that $\mathbf{H}\cdot a^\top=(\mathbf{L}\mid \mathbf{I}_{n-k})\cdot a^\top=0$. Since $\mathbf{H}$ is a parity check matrix of $\mathbf{C}$, this implies that $a$ is a codeword of $\mathbf{C}$. Since $\operatorname{wt}(a)\leq k<d(\mathbf{C})$, the only possible case is $a=0$, which means that the columns of $\mathbf{L}$ are linearly independent and therefore $\operatorname{rk}(\mathbf{L})$ is maximum.
    
\end{proof}

This means that for codes with a sufficiently large minimum distance, there is no optimal choice for the matrix $\mathbf{L}$. The situation is different if $d(\mathbf{C})\leq k$. In this case, we consider a generalization of the minimum distance, the \textit{generalized Hamming weights}. In order to define them, we start by defining the \textit{support} of a codeword $c=(c_1,\dots,c_n)\in\mathbf{C}$ as $$\operatorname{supp}(c)=\{i\in[n]\mid c_i\neq0\}.$$ The support of a subcode $\mathbf{D}\leq\mathbf{C}$ is defined as $$\operatorname{supp}(\mathbf{D})=\bigcup\limits_{x\in\mathbf{D}}\operatorname{supp}(x).$$
\begin{defin}[Generalized Hamming weights -- GHWs]
    For $1\leq i\leq k$, the $r$-th \emph{generalized Hamming weight} of an $[n,k]$-linear code $\mathbf{C}$ is
    \begin{equation*}
        d_i(\mathbf{C})=\min\{|\operatorname{supp}(\mathbf{D})| : \mathbf{D}\leq\mathbf{C},\,\operatorname{dim}(\mathbf{D})=i\}.
    \end{equation*}
\end{defin}
Notice that $d_1(\mathbf{C})=d(\mathbf{C})$. For the main properties of GHWs, we refer to~\cite[Chapter 7.10]{HuffmanPless2003Fundamentals}, where one can find the next result and more.
\begin{theo}[Monotonicity and Generalized Singleton bound]
    For an $[n,k]$-linear code $\mathbf{C}$, we have
    \begin{itemize}
        \item \textit{monotonicity}: $1\leq d_1(\mathbf{C})<d_2(\mathbf{C})<\dots<d_k(\mathbf{C})\leq n$;
        \item \textit{generalized Singleton bound}: $d_i(\mathbf{C})\leq n-k+i$.
    \end{itemize}
\end{theo}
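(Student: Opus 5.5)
We prove the two items in order, deriving the generalized Singleton bound from monotonicity. The approach is a shortening argument: from a subcode of dimension $i+1$ with minimal support, we build a subcode of dimension $i$ whose support is strictly smaller.

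\emph{Monotonicity.} First, $d_1(\mathbf{C})\geq 1$, because any $1$-dimensional subcode contains a nonzero codeword and so has nonempty support. Next, $d_k(\mathbf{C})\leq n$, because the only $k$-dimensional subcode is $\mathbf{C}$ itself and $\operatorname{supp}(\mathbf{C})\subseteq[n]$.

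For the strict inequalities, fix $1\leq i\leq k-1$. Let $\mathbf{D}\leq\mathbf{C}$ have dimension $i+1$ and $|\operatorname{supp}(\mathbf{D})|=d_{i+1}(\mathbf{C})$. Since $\mathbf{D}\neq 0$, we can pick an index $j\in\operatorname{supp}(\mathbf{D})$. Consider the coordinate functional $\pi_j\colon\mathbf{D}\to\F_q$, $x\mapsto x_j$. By the choice of $j$ it is nonzero, hence surjective. Therefore its kernel $\mathbf{D}'=\{x\in\mathbf{D}\mid x_j=0\}$ is a subcode of $\mathbf{C}$ of dimension exactly $i$.

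Every codeword of $\mathbf{D}'$ is a codeword of $\mathbf{D}$ vanishing in position $j$. Hence $\operatorname{supp}(\mathbf{D}')\subseteq\operatorname{supp}(\mathbf{D})\setminus\{j\}$, which gives
\begin{equation*}
d_i(\mathbf{C})\leq|\operatorname{supp}(\mathbf{D}')|\leq d_{i+1}(\mathbf{C})-1 .
\end{equation*}
This is the strict inequality $d_i(\mathbf{C})<d_{i+1}(\mathbf{C})$.

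\emph{Generalized Singleton bound.} The $d_i(\mathbf{C})$ are integers, so the strict inequalities give $d_{i+1}(\mathbf{C})\geq d_i(\mathbf{C})+1$. Iterating from $i$ up to $k$ yields $d_k(\mathbf{C})\geq d_i(\mathbf{C})+(k-i)$. Combining this with $d_k(\mathbf{C})\leq n$ gives $d_i(\mathbf{C})\leq n-k+i$.

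The only step that needs care is confirming that $\mathbf{D}'$ has dimension exactly $i$, rather than just at most $i+1$. This is where the choice of $j$ inside the support matters: it makes $\pi_j$ nonzero on $\mathbf{D}$, so its kernel has codimension one. The remaining steps are routine bookkeeping.
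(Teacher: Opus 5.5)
Your proof is correct and is the standard argument. The paper does not prove this result itself; it cites Huffman--Pless, where monotonicity is obtained exactly as you do, by shortening a minimal-support subcode of dimension $i+1$ at a coordinate $j$ of its support. The generalized Singleton bound is then deduced, as in your proof, by iterating the strict inequalities up to $d_k(\mathbf{C})\leq n$.
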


If $d(\mathbf{C})\leq k$, one can look for the maximum $i\in[k]$ such that $d_i(\mathbf{C})\leq k$. This idea is related to our problem as follows.
\begin{prop}
    Given a parity check matrix $\mathbf{H}=(\mathbf{L}\mid \mathbf{I}_{n-k})$ of an $[n,k]$-linear code $\mathbf{C}$ such that $d(\mathbf{C})\leq k$, we have 
    \begin{equation*}
        \operatorname{rk}(\mathbf{L})\geq k-\max\{i\in[k]\mid d_i(\mathbf{C})\leq k\}.
    \end{equation*}
\end{prop}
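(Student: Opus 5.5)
The idea is to identify the kernel of $\mathbf{L}$ with a subcode of $\mathbf{C}$ whose support is contained in the first $k$ coordinates, and then compare its dimension with the generalized Hamming weights. This mirrors the argument of the previous lemma, which handled the case where this kernel is trivial.

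First, I will set $m=\operatorname{dim}_{\F_2}\ker(\mathbf{L})=k-\operatorname{rk}(\mathbf{L})$, where $\ker(\mathbf{L})=\{a\in\F_2^k\mid \mathbf{L}\cdot a^\top=0\}$. Consider the injective linear map $\iota\colon\F_2^k\to\F_2^n$, $a\mapsto(a_1,\dots,a_k,0,\dots,0)$. For every $a\in\F_2^k$,
\begin{equation*}
\mathbf{H}\cdot\iota(a)^\top=(\mathbf{L}\mid\mathbf{I}_{n-k})\cdot\iota(a)^\top=\mathbf{L}\cdot a^\top .
\end{equation*}
Hence $a\in\ker(\mathbf{L})$ if and only if $\iota(a)\in\mathbf{C}$. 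So $\mathbf{D}=\iota(\ker(\mathbf{L}))$ is a subcode of $\mathbf{C}$ of dimension $m$ with $\operatorname{supp}(\mathbf{D})\subseteq[k]$, and therefore $|\operatorname{supp}(\mathbf{D})|\leq k$.

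Next, I split into two cases according to $m$.
\begin{itemize}
\item If $m\geq1$, then by the definition of GHWs, $d_m(\mathbf{C})\leq|\operatorname{supp}(\mathbf{D})|\leq k$. So $m$ belongs to the set $\{i\in[k]\mid d_i(\mathbf{C})\leq k\}$, which gives $m\leq\max\{i\in[k]\mid d_i(\mathbf{C})\leq k\}$. Rearranging $m=k-\operatorname{rk}(\mathbf{L})$ yields the claimed inequality.
\item If $m=0$, then $\operatorname{rk}(\mathbf{L})=k$. The set $\{i\in[k]\mid d_i(\mathbf{C})\leq k\}$ is nonempty because the hypothesis $d(\mathbf{C})=d_1(\mathbf{C})\leq k$ puts $1$ in it. So its maximum is at least $1$, and the bound holds trivially.
\end{itemize}
The hypothesis $d(\mathbf{C})\leq k$ is needed only to guarantee that this maximum is well defined.

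The only step needing care is the correspondence between $\ker(\mathbf{L})$ and codewords supported in $[k]$. That step is just the block computation displayed above, so I do not expect a genuine obstacle.
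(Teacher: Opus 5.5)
Your proof is correct and follows the same approach as the paper: both identify $\ker(\mathbf{L})$ with the subcode $\mathbf{C}\cap\{x_{k+1}=\dots=x_n=0\}$, whose support lies in $[k]$, and then compare its dimension with the generalized Hamming weights. The only difference is cosmetic: you show directly that $m=\dim\ker(\mathbf{L})$ lies in the index set, while the paper argues the contrapositive, namely that every subcode with support of size at most $k$ has dimension at most the maximum.
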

\begin{proof}
    Since $d(\mathbf{C})\leq k$, the set $\{i\in[k]\mid d_i(\mathbf{C})\leq k\}$ is not-empty, and therefore we can consider its maximum $r=\max\{i\in[k]\mid d_i(\mathbf{C})\leq k\}$. If $r=k$, then $\operatorname{rk}(\mathbf{L})\geq 0$ is obviously true. Otherwise, we have that for any $j>r$, $d_j(\mathbf{C})>k$, and therefore for any subcode $\mathbf{D}$ of $\mathbf{C}$ such that $\operatorname{dim}(\mathbf{D})=j>r$, we have $|\operatorname{supp}(\mathbf{D})|>k$. Therefore, any subcode $\mathbf{D}\leq\mathbf{C}$ such that $|\operatorname{supp}(\mathbf{D})|\leq k$, we have $\operatorname{dim}(\mathbf{D})\leq r$.
    
    We now consider the subcode $\mathbf{V}=\mathbf{C}\cap\{(x_1,\dots,x_n)\in\F_2^n\mid x_{k+1}=\ldots=x_n=0\}$. The support of $\mathbf{V}$ is contained in $\{1,\dots,k\}$, and therefore $|\operatorname{supp}(\mathbf{V})|\leq k$, and thus $\operatorname{dim}(\mathbf{V})\leq r$ for the above considerations. It is also true that $\mathbf{V}\cong\operatorname{Ker}(\mathbf{L})$, since for any $x\in\F_2^k$ it holds
    \begin{equation*}
        \mathbf{L}\cdot x^\top=0\iff \mathbf{L}\cdot x^\top+\mathbf{I}_{n-k}\cdot0^\top=0\iff\mathbf{H}\cdot(x\mid 0)^\top=0\iff(x\mid 0)\in\mathbf{V}.
    \end{equation*}
    Therefore, we have $\operatorname{dim}(\operatorname{Ker}(\mathbf{L}))\leq r$, which implies $\operatorname{rk}(\mathbf{L})=\operatorname{dim}(\operatorname{Im}(\mathbf{L}))\geq k-r$ as desired.
    
\end{proof}

We now have a lower bound for $\operatorname{rk}(\mathbf{L})$, which can be used to refine the bound given in Corollary~\ref{cordsolS'}. However, there are two main difficulties with this approach. First, computing the GHWs of a code may be hard. Although there are many theoretical results, there is no general efficient algorithm to compute the GHWs of a generic linear code (see e.g.~\cite{10.1145/3773284}). Since the problem of computing the minimum distance $d_1(\mathbf{C})$ of a code is NP-complete~\cite{1055873}, the problem of computing the GHWs is also NP-complete. Second, it is not clear how to compute the matrix $\mathbf{L}$ of least rank, or even one of relatively small rank. This problem is related to understanding the relations of linear dependence among the columns of $\mathbf{H}$, which is in turn related to the problem of finding the \textit{minimal codewords} of $\mathbf{C}$. The minimal codewords are defined as the codewords of $\mathbf{C}$ whose support does not properly contain the support of any other non-zero codeword. For a general linear code, the problem of finding its minimal codewords is again NP-complete. Nevertheless, the GHWs (or bounds on them) are known for several families of codes. See for example~\cite{133259,135653,651015,LEE2015265, KangquanLi2022AdvancesinMathematicsofCommunications}.

\section{Some variants of the binary ESDP}\label{secVariants}

In this section, we briefly discuss how to apply some of the ideas from the previous section to other contexts. More specifically, we discuss the BSDP, fields of size $q>2$, and the Regular variant.

\subsection{Bounded Syndrome Decoding Problem -- BSDP}\label{secBSDP}
So far we discussed the ESDP, where we search for an error of weight exactly $t$. This is the case of main interest from the cryptographic point of view. Intuitively, in cryptography we prefer to add as much noise as possible to the plaintext, therefore we often add an error of weight $t$ equal to the maximum error-correction capability of the code. In coding theory, one is more likely to be interested in the BSDP, as one wants to correct every error whose weight is smaller than or equal to the maximum error-correction capability of the code. The two variants of the SDP, however, are strictly related. In fact, one can solve the ESDP for a target weight $t$ by solving the BSDP for $t-1$ and $t$, while solving the ESDP for $1,\dots,t$ allows to solve the BSDP for $t$. One expects that the complexity of the ESDP grows with $t$ and that the complexity of the BSDP for $t$ is dominated by the last step, i.e. by the ESDP for $t$. If this is the case, then the ESDP and the BSDP for the same $t$ have the same asymptotic complexity.

We now briefly describe how to apply the ideas from the previous section to the BSDP. The core is the following lemma, which is a variation of Lemma~\ref{lemESPweight}:
\begin{lem}\label{lemBSDP2}
    For any target weight $0\leq t\leq n$,
    \begin{equation*}
        \{x\in\F_2^n\mid\es{n}{j}(x)=0\text{ for all }j=t+1,\dots,n\}=\{x\in\F_2^n\mid\operatorname{wt}(x)\leq t\}.
    \end{equation*}
\end{lem}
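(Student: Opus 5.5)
We prove the two inclusions separately, relying on the evaluation formula from the proof of Lemma~\ref{lemESPweight}: for any $y\in\F_2^n$ of weight $w$ and any $j\in[n]$,
\begin{equation*}
    \es{n}{j}(y)=\binom{w}{j}\pmod 2.
\end{equation*}
This is the computation in the implication ``1)$\Rightarrow$2)'', which holds for an arbitrary weight $w$, not only for the target $t$.

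For the inclusion ``$\supseteq$'', take $y$ with $\operatorname{wt}(y)=w\leq t$. Then for every $j\geq t+1>w$ we have $\binom{w}{j}=0$. Combinatorially, the support $T$ of $y$ has no subsets of size $j$, so every monomial $y_\alpha$ with $|\alpha|=j$ contains a zero entry. Hence $\es{n}{j}(y)=0$ for all $j=t+1,\dots,n$.

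For the inclusion ``$\subseteq$'', argue by contradiction. Suppose $y$ satisfies $\es{n}{j}(y)=0$ for all $j=t+1,\dots,n$, but $w=\operatorname{wt}(y)>t$. Choose $j=w$. Then $t+1\leq w\leq n$, so $\es{n}{w}(y)=0$ by hypothesis. On the other hand, $\es{n}{w}(y)=\binom{w}{w}=1$: exactly one $w$-subset of $[n]$ lies inside the support $T$, namely $T$ itself. This contradiction gives $\operatorname{wt}(y)\leq t$.

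There is no real obstacle here. The only point needing care is to pick the index $j=w$, the weight itself, rather than a power of $2$ as in Lemma~\ref{lemESPweight}. This is what makes the parity argument trivial, since $\binom{w}{w}=1$ always, and it avoids any appeal to Lucas' Theorem. The edge case $t=n$ is vacuous: the condition set is empty and every vector has weight at most $n$.
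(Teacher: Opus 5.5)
Your proof is correct and matches the paper's argument. Both directions rest on the evaluation $\es{n}{j}(y)=\binom{\operatorname{wt}(y)}{j} \pmod 2$ from the proof of Lemma~\ref{lemESPweight}. For ``$\subseteq$'' you choose $j=\operatorname{wt}(y)$ to reach the contradiction $\es{n}{j}(y)=1$, and for ``$\supseteq$'' you use $\binom{w}{j}=0$ whenever $j>w$.
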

\begin{proof}
    "$\subseteq$": suppose that $\operatorname{wt}(x)>t$. Then, by Lemma~\ref{lemESPweight}, we have that $\es{n}{\operatorname{wt}(x)}(x)=\binom{\operatorname{wt}(x)}{\operatorname{wt}(x)}=1$, which is a contradiction. Therefore, $\operatorname{wt}(x)\leq t$;
    
    "$\supseteq$": for any $j=t+1,\dots,n$, by Lemma~\ref{lemESPweight} we have $\es{n}{j}(x)=\binom{\operatorname{wt}(x)}{j}$, which is equal to $0$ since $j>t\geq\operatorname{wt}(x)$.
    
\end{proof}
This allows us to use the elementary symmetric polynomials to efficiently encode the weight constraint in the BSDP. Let $S_{3B}=\{\es{n}{j}(x)\mid j=t+1,\dots,n\}$. The resulting polynomial system $S_B=S_1\cup S_2\cup S_{3B}$ is such that the solutions $S_B=0$ are exactly the solutions of the BSDP.

Notice that, if $x\in\F_2^n$ is such that $\operatorname{wt}(x)=t<2^r$, where $r=\fl*{\operatorname{log}_2 n}$, then, due to \eqref{eq:BuildingBlocksESP}, we have $\es{n}{j}(x)=0$ for any $j\geq 2^r$, since $t_r=0$ in the binary decomposition of $t$. Therefore, if the target weight is $t<2^r$, it suffices to let $S_{3B}=\{\es{n}{j}(x)\mid j=t+1,\dots,2^r\}$, since the equations corresponding to higher degrees are also satisfied. This has the effect of reducing the degree of the polynomials in the system, similarly to the model $S'$ for the ESDP.

Concerning the complexity of this model, the same considerations as in Section~\ref{secFirstModel} lead to focus on the degree of regularity of the symmetric ideal $(\es{n}{t+1},\dots,\es{n}{n})+Q_n$. Applying Proposition~\ref{propAlgoESP} with $u=t$, we obtain
\begin{equation*}
    d_{\operatorname{reg}}(S_B)\leq\fl*{\frac{n+t}{2}}+1.
\end{equation*}
The bound $d_{\operatorname{reg}}(S_B)\leq k+1$ holds also in this case, since this holds for every system that contains $n-k$ linear equations and the field equations. A bound on the solving degree can be derived from Theorem~\ref{thm:salizz} as usual.

The idea of dropping the equation of largest degree, instead, may not lead to similar results, it introduces too many possible weights among the solutions of the polynomial system. Therefore, with the variable substitution we may lower the degree of regularity or get a dependence with the specific instance of the BSDP, but we will not get a better bound for the solving degree.

\subsection{Field of size $q>2$}\label{secSizeq>2}
If we consider a field $\F_q$ of size $q>2$, the proposed model for the ESDP does not have the same efficiency, as it relies on Lemma~\ref{lemESPweight}, which is not true in the general case of $q\geq 2$. One may solve several polynomial systems, each one corresponding to a specific combination of $t$ non-zero coefficients in $\F_q$, where $t$ is the target weight, but it would soon be computationally unfeasible. 

Another way may be, as in~\cite{caminata2025quadraticmodelingssyndromedecoding}, to add some new variables $\{z_i\}_{i=1}^n$ which represent the non-zero positions of the vector. This was done by adding the set of polynomials $\{x_i^{q-1}-z_i\mid i=1,\dots,n\}$ to their model. In combination with field equations $\{x_i^q-x_i\mid i=1,\dots,n\}$, they force $z_i$ to be $0$ if $x_i=0$, and $1$ otherwise. In our case, we can slightly modify our model for the ESDP over $\F_2$, by adding this set of polynomials to the usual parity check encoding, the finite field equations and the weight constraint encoding, where the last one is expressed by $\{\es{n}{2^i}(z)-t_i\mid i=0,\dots,r\}$, $r=\fl*{\operatorname{log}_2 n}$. Since we can also add the equations $\{z_i^2-z_i\mid i=1,\dots,n\}$, the degree of regularity of the given system is the degree of regularity of the ideal $I_n^q=(x_i^{q-1},z_i^2\mid i=1,\dots,n)+(\es{n}{2^i}(z)\mid i=0,\dots,r)+(\sum\limits_{j=1}^n h_{i,j}x_j\mid i=1,\dots,n-k)$. As in the proof of Proposition~\ref{propAlgoESP}, one can show that every $z$-variable permutation of $z_1\cdots z_{\fl*{\frac{n}{2}}+1}$ belongs to this ideal. Therefore, in the quotient $\F_q[x_1,\dots,x_n,z_1,\dots,z_n]/I_n^q$, a non-zero monomial of highest degree must divide a $z$-variable permutation of $x_1^{q-2}\cdots x_n^{q-2}\cdot z_1\cdots z_{\fl*{\frac{n}{2}}}$. After the $x$-variable substitution given by the parity check encoding, we have that the degree of regularity is bounded from above by $k(q-2)+\fl*{\frac{n}{2}}+1$. As $q$ increases, this bound becomes significantly larger than the one obtained in the binary case.\\

For the BSDP in the case $q>2$, %we can make similar considerations, but this time we do not even need the additional variables $z_i$, since Lemma~\ref{lemBSDP2} also holds when we consider vectors in $\F_q^n$. Therefore, 
one may use the same model proposed in Section~\ref{secBSDP}, with the proper field equations. %In order to bound the degree of regularity associated with the system, consider the usual related quotient. After the substitution of variables given by the parity check encoding, the highest possible degree among the monomials is reached by $x_1^{q-1}\cdots x_k^{q-1}$ and its conjugates under the action of the symmetric group. 
The degree of regularity of any system that contains the field equations of $\mathbb{F}_q$ is bounded from above by $k(q-1)+1$. This yields a simple, although inaccurate, bound for the solving degree of the system.

\subsection{Regular Syndrome Decoding Problem}\label{sezRSDP}
The Regular variant of the SDP, first introduced in~\cite{10.1007/11554868_6}, is defined, in the usual setting, as follows:  
\begin{prob}[RSDP: Regular Syndrome Decoding Problem]
    Find a vector $e\in\F_q^n$ such that $\mathbf{H}\cdot e^\top=s^\top$ and $e=(e_1,\dots,e_t)$, where $t\mid n$, $e_\ell\in\Fq^{\frac{n}{t}}$ and $\operatorname{wt}(e_\ell)=1$ for $\ell=1,\dots,t$.
\end{prob}
In this variant, therefore, the solution vector $e$ is divided into $t$ equal-length blocks, each containing exactly one non-zero element. The model described in Section~\ref{secCore} can be applied as follows. If we divide the variables into $t$ sets $X_1=\{x_1,\dots,x_h\},\dots,X_t=\{x_{n-h+1},\dots,x_n\}$, where $h=\frac{n}{t}$, we can consider the weight constraint encoding over each set. For example, in the binary case, for any $\ell=1,\dots,t$ we can impose the condition $\operatorname{wt}(e_\ell)=1$ by considering the set $S_{3_\ell}=\{\es{h}{1}(X_\ell)+1\}\cup\{\es{h}{2^i}(X_\ell)\mid i=1,\dots,\fl*{\operatorname{log}_2 h}\}$. Therefore, the set $S_R=S_1\cup S_2\cup\bigcup\limits_{\ell=1}^t S_{3_\ell}$ is such that the solutions of $S_R=0$ are the solutions of the RSDP instance. 

To study the complexity of this variant model, we again consider the corresponding degree of regularity. Thanks to Lemma~\ref{lem:belprodottoesp}, we can prove that for any $\ell=1,\dots,t$ the ideal generated by $S_R^{\operatorname{top}}$ contains all the elementary symmetric polynomials $\es{h}{j}(X_\ell)$, $j=1,\dots,h$, and by Lemma~\ref{full prod} it is possible to recursively construct the elementary symmetric polynomials in an ever-increasing number of variables, until reaching the polynomials in the full set of $n$ variables. We can therefore apply Proposition~\ref{propAlgoESP} and obtain
\begin{equation*}
    d_{\operatorname{reg}}(S_R)\leq\fl*{\frac{n}{2}}+1.
\end{equation*}
This time, the degrees of the system polynomials are lower than the degree of regularity, since $h\leq\frac{n}{2}$, therefore Theorem~\ref{thm:salizz} gives us directly that
\begin{equation*}
    d_{\operatorname{sol}}(S_R)\leq\fl*{\frac{n}{2}}+2.
\end{equation*}
Although this result is less favourable than the one for the model for the RSDP proposed by Briaud and \O{}ygarden in~\cite{10.1007/978-3-031-30589-4_14}, we notice that our model could more easily be adapted to a \textit{less regular} version of the RSDP, such as, for example, to the case of blocks of different lengths or with different weights.

\printbibliography

\end{document}